\newcommand{\Occ}{\mathit{Occ}}
\newcommand{\vOcc}{\mathit{vOcc}}
\newlength\savedwidth
\newcommand{\rev}[1]{#1^{R}}
\newcommand{\LCEQ}{\mathsf{LCE}}
\newcommand{\lcp}[2]{\mathsf{LCP}(#1,#2)}
\newcommand{\lcs}[2]{\mathsf{LCS}(#1,#2)}
\newcommand{\shrink}[2]{\mathit{Shrink}_{#1}^{#2}}
\newcommand{\xshrink}[2]{\mathit{XShrink}_{#1}^{#2}}
\newcommand{\pow}[2]{\mathit{Pow}_{#1}^{#2}}
\newcommand{\xpow}[2]{\mathit{XPow}_{#1}^{#2}}
\newcommand{\uniq}[1]{\mathit{Uniq}(#1)}
\newcommand{\id}[1]{\mathit{id}(#1)}
\newcommand{\encblock}[1]{\mathit{Eblock}(#1)}
\newcommand{\encblockd}[1]{\mathit{Eblock}_{d}(#1)}
\newcommand{\encpow}[1]{\mathit{Epow}(#1)}
\newcommand{\deltaLR}[1]{\Delta_{#1}}
\newcommand{\sig}[1]{\mathit{Sig}(#1)}
\newcommand{\val}[1]{\mathit{val}(#1)}
\newcommand{\valp}[1]{\mathit{val}^{+}(#1)}
\newtheorem{theorem}{Theorem}
\newtheorem{definition}[theorem]{Definition}
\newtheorem{lemma}[theorem]{Lemma}
\newtheorem{problem}[theorem]{Problem}
\newtheorem{fact}[theorem]{Fact}
\newtheorem{example}[theorem]{Example}
\title{Dynamic index and LZ factorization in compressed space}
 \author{
   Takaaki Nishimoto$^1$\quad
   Tomohiro I$^2$\quad
   Shunsuke Inenaga$^1$\\
   Hideo Bannai$^1$\quad
   Masayuki Takeda$^1$\\
  {$^1$ Department of Informatics, Kyushu University}\\
  {\texttt{\{takaaki.nishimoto,inenaga,bannai,takeda\}@inf.kyushu-u.ac.jp}}\\
  {$^2$ Kyushu Institute of Technology, Japan}\\
  {\texttt{tomohiro@ai.kyutech.ac.jp}}
 }
\date{}
\begin{document}
\maketitle

\begin{abstract}
In this paper, we propose a new \emph{dynamic compressed index} of $O(w)$ space for a dynamic text $T$, where 
$w = O(\min(z \log N \log^*M, N))$ is the size of the signature encoding of $T$,
$z$ is the size of the Lempel-Ziv77 (LZ77) factorization of $T$,
$N$ is the length of $T$,
and $M \geq 4N$ is an integer that can be handled in constant time under word RAM model.
Our index supports searching for a pattern $P$ in $T$ in
$O(|P| f_{\mathcal{A}} + \log w \log |P| \log^* M (\log N + \log |P| \log^* M) + \mathit{occ} \log N)$ time and
insertion/deletion of a substring of length $y$ in
$O((y+ \log N\log^* M)\log w \log N \log^* M)$ time, where
$f_{\mathcal{A}} = O(\min \{ \frac{\log\log M \log\log w}{\log\log\log M}, \sqrt{\frac{\log w}{\log\log w}} \})$.
Also, we propose a new space-efficient LZ77 factorization algorithm for
a given text of length $N$, which runs in $O(N f_{\mathcal{A}} + z \log w \log^3 N (\log^* N)^2)$ time with $O(w)$ working space.
\end{abstract}

\section{Introduction}
\subsection{Dynamic compressed index}
Given a text $T$, the string indexing problem is to construct 
a data structure, called an index, so that querying
occurrences of a given pattern in $T$ can be answered efficiently.
As the size of data is growing rapidly in the last decade,
many recent studies have focused on indexes working in compressed text space
(see e.g.~\cite{GagieGKNP12,DBLP:conf/latin/GagieGKNP14,DBLP:conf/spire/ClaudeN12a,claudear:_self_index_gramm_based_compr}).
However most of them are static, i.e., they have to be reconstructed from scratch when the text is modified,
which makes difficult to apply them to a dynamic text.
Hence, in this paper, we consider the \emph{dynamic compressed text indexing problem}
of maintaining a compressed index for a text string that can be modified.
Although there exists several dynamic \emph{non-compressed} text indexes
(see e.g.~\cite{DBLP:conf/focs/SahinalpV96,DBLP:conf/soda/AlstrupBR00,DBLP:journals/jda/EhrenfeuchtMOW11} for recent work), there has been little work for the compressed variants.
Hon et al.~\cite{DBLP:conf/dcc/HonLSSY04} proposed 
the first dynamic compressed index of
$O(\frac{1}{\epsilon}(NH_0+N))$ bits of
space which supports searching of $P$ in $O(|P| \log^{2}
N(\log^{\epsilon} N + \log |\Sigma|) + \mathit{occ} \log^{1+\epsilon} N)$ time
and insertion/deletion of a substring of length $y$ in
$O((y+\sqrt{N})\log^{2+\epsilon} N)$ amortized time, where $0 <
\epsilon \leq 1$ and $H_0 \leq \log |\Sigma|$ denotes the zeroth order empirical entropy of the text of length $N$~\cite{DBLP:conf/dcc/HonLSSY04}.
Salson et al.~\cite{DBLP:journals/jda/SalsonLLM10} also proposed 
a dynamic compressed index, called \emph{dynamic FM-Index}. 
Although their approach works well in practice,
updates require $O(N \log N)$ time in the worst case.
To our knowledge, these are the only existing dynamic compressed indexes to date.

In this paper, we propose a new dynamic compressed index, as follows:
\begin{theorem}\label{theo:dynamic_index}
	Let $M$ be the maximum length of the dynamic text to index,
	$N$ the length of the current text $T$,
	$w = O(\min (z \log N \log^* M, N))$ the size of the signature encoding of $T$, and
	$z$ the number of factors in the Lempel-Ziv 77 factorization of $T$
	without self-references.
	Then, there exists a dynamic index of $O(w)$ space 
	which supports searching of a pattern $P$ 
	in 
	$O(|P| f_{\mathcal{A}} + \log w \log |P| \log^* M (\log N + \log |P| \log^* M) + \mathit{occ} \log N)$
	time, where 
	$f_{\mathcal{A}} = O(\min \{ \frac{\log\log M \log\log w}{\log\log\log M}, \sqrt{\frac{\log w}{\log\log w}} \})$,
	and insertion/deletion of a (sub)string $Y$ into/from an arbitrary position of $T$ in amortized $O((|Y|+ \log N
	\log^* M)\log w \log N \log^* M)$ time.
	Moreover, if $Y$ is given as a substring of $T$,
	we can support insertion in amortized $O(\log w (\log N \log^* M)^2)$ time.
\end{theorem}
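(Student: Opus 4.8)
The plan is to build everything around a \emph{signature encoding} of $T$: a balanced, straight-line-program–like DAG obtained by iterating a locally consistent parsing (deterministic coin-tossing alphabet reduction followed by block/run grouping) for $O(\log N)$ levels, so that each level shrinks the string by a constant factor. The two properties I would isolate and prove first are (i) that the encoding has size $O(w)$ with $w = O(\min(z\log N\log^* M, N))$, where the $\log^* M$ factor is exactly the cost of the coin-tossing step; and (ii) a \emph{common-sequence} property: any substring $T[i..j]$ admits a canonical decomposition into $O(\log N \log^* M)$ signatures drawn from the levels of the encoding, and two equal substrings share the same canonical decomposition except within boundary regions of length $O(\log^* M)$ at each level. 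Property (ii) is the workhorse for both searching and updating, so I would establish it before anything else.

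For the searching bound, I would first compute the signature encoding of the query $P$ by feeding it through the same parser; locating and hashing each created signature costs $O(f_{\mathcal{A}})$ per symbol via the predecessor/dictionary structure over signatures, giving the $O(|P| f_{\mathcal{A}})$ term. Next I would fix an \emph{anchor}: by local consistency, every occurrence of $P$ in $T$ is parsed identically around a designated core signature of $P$, so the problem reduces to (a) finding all occurrences of that core as a variable in the encoding of $T$, and (b) verifying, via two longest-common-extension ($\mathsf{LCP}/\mathsf{LCS}$) queries against the encoding, that the surrounding $O(\log N \log^* M)$ signatures of $P$ match. Each $\mathsf{LCE}$ query over the DAG should cost $O(\log w \log^* M \,(\log N + \log|P|\log^* M))$ and is invoked $O(\log|P|)$ times, which accounts for the middle term; reporting each surviving candidate then costs $O(\log N)$, yielding the $\mathit{occ}\log N$ term.

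For the update bound, inserting or deleting $Y$ only perturbs the parse of $T$ inside $Y$ together with the $O(\log^* M)$-sized boundary on each of the $O(\log N)$ levels near the affected positions; this is where local consistency pays off, since the amortized number of signatures created or destroyed per update is $O((|Y|+\log N\log^* M)\log^* M)$. Rebuilding the signature encoding of $Y$ from scratch contributes the $|Y|$ term, and maintaining the auxiliary dynamic dictionaries and the occurrence/$\mathsf{LCE}$ structures under these changes costs a further $O(\log w \log N \log^* M)$ factor, giving the stated amortized bound. When $Y$ is already a substring of $T$, its signature encoding can be extracted from the existing DAG rather than rebuilt, so the $|Y|$ term disappears and only the boundary reparsing of $O(\log N \log^* M)$ signatures together with the $O(\log w \log N \log^* M)$ maintenance cost remains, giving the improved $O(\log w (\log N \log^* M)^2)$ bound.

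The main obstacle I anticipate is establishing the common-sequence property with tight enough boundary bounds to \emph{simultaneously} guarantee $O(w)$ space, the locality of updates (only $O(\log N \log^* M)$ signatures changing, amortized), and the correctness of the anchor-based search — in particular, proving that the canonical decomposition of a pattern occurrence inside $T$ really coincides with the standalone parse of $P$ up to the controlled boundary, so that no occurrence is missed and none is double-counted. Discharging the amortization for the dynamic auxiliary structures, so that the worst-case propagation of reparsing does not exceed the claimed amortized cost, is the other delicate point.
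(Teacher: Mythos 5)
Your foundation (signature encoding, size bound, common-sequence property, and the update analysis based on locality of the reparsing) matches the paper's, and your accounting of the $O(|P|f_{\mathcal{A}})$ and update terms is essentially the paper's argument. However, your search strategy contains a genuine gap. You propose to fix an \emph{anchor}/core signature of $P$, enumerate all occurrences of that core as a variable in the encoding of $T$, and then verify each candidate by LCE queries against the surrounding signatures. The problem is that the number of occurrences of the core, call it $\mathit{occ}_c$, is \emph{not} bounded by any function of $\mathit{occ}$: the core is a proper fragment of $P$ and may occur in $T$ far more often than $P$ itself does. So either you run LCE verification per candidate (and the running time picks up a factor of $\mathit{occ}_c$, destroying the claimed $O(\mathit{occ}\log N)$ reporting term), or you run only $O(\log|P|)$ LCE queries in total as you state (and then you have not actually filtered out the false candidates). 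This is precisely the weakness of the ESP-index-style and anchor-based approaches that the theorem is designed to overcome.

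The paper's proof takes a different route for this step: it adopts the Claude--Navarro primary-occurrence framework. Every occurrence of $P$ with $|P|>1$ is charged to the lowest variable $X$ covering it, inducing a split $P=P_1P_2$ with $P_1$ a suffix of $X.{\rm left}$ and $P_2$ a prefix of $X.{\rm right}$; these primary occurrences are retrieved by two-dimensional orthogonal range reporting over the grid $\mathcal{R}=\{(X.{\rm left}^{R},X.{\rm right})\}$, so every reported point is a true primary occurrence and no per-candidate verification is needed. The two properties of signature encodings are then used only to (1) compute the query rectangles by binary search with LCE queries in $O(\log w(\log N+\log|P|\log^{*}M))$ time each, and (2) restrict the split positions $j$ to the $O(\log|P|\log^{*}M)$ boundaries of $\encpow{\uniq{P}}$ (Lemma~\ref{lem:pattern_occurrence_lemma1}), rather than all $|P|-1$ of them. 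Dynamization then reduces to maintaining a dynamic range-reporting structure under the $O(y+\log N\log^{*}M)$ signature insertions/deletions per update. If you want to salvage your approach, you would need to replace the anchor enumeration with something whose output size is genuinely bounded by the number of pattern occurrences; as written, the $\mathit{occ}\log N$ term cannot be achieved.
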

Since $z \geq \log N$,
$\log w = \max\{\log z, \log(\log^* M)\}$.
Hence, our index is able to find pattern occurrences faster than the index of Hon et al.
when the $|P|$ term is dominating in the pattern search times.
Also, our index allows faster substring insertion/deletion on the text
when the $\sqrt{N}$ term is dominating. 

\subsubsection{Related work.}\label{sec:related_work}
To achieve the above result, technically speaking, 
we use the \emph{signature encoding} $\mathcal{G}$ of $T$, 
which is based on the \emph{locally consistent parsing} technique.
The signature encoding was proposed by Mehlhorn et al.
for equality testing on a dynamic set of strings~\cite{DBLP:journals/algorithmica/MehlhornSU97}. 
Since then, the signature encoding and the related ideas have been used in many applications. 
In particular, Alstrup et al.'s proposed dynamic index (not compressed) which is based on the signature encoding of strings,
while improving the update time of signature encodings~\cite{DBLP:conf/soda/AlstrupBR00} and the locally consistent parsing algorithm
(details can be found in the technical report~\cite{LongAlstrup}).

Our data structure uses Alstrup et al.'s
fast string concatenation/split algorithms (update algorithm) and 
linear-time computation of locally consistent parsing,
but has little else in common than those. 
Especially, Alstrup et al.'s dynamic pattern matching
algorithm~\cite{DBLP:conf/soda/AlstrupBR00,LongAlstrup}
requires to maintain specific locations called \emph{anchors}
over the parse trees of the signature encodings,
but our index does not use anchors.
Our index has close relationship to the ESP-indices~\cite{DBLP:conf/wea/TakabatakeTS14,TakabatakeTS15},
but there are two significant differences between ours and ESP-indices:
The first difference is that the ESP-index~\cite{DBLP:conf/wea/TakabatakeTS14}
is static and its online variant~\cite{TakabatakeTS15} allows only for
appending new characters to the end of the text,
while our index is fully dynamic allowing for insertion and deletion
of arbitrary substrings at arbitrary positions.
The second difference is that the pattern search time of the ESP-index 
is proportional to the number $\mathit{occ}_c$ of occurrences of the so-called ``core'' 
of a query pattern $P$, which corresponds to a maximal subtree of the 
ESP derivation tree of a query pattern $P$.
If $\mathit{occ}$ is the number of occurrences of $P$ in the text,
then it always holds that $\mathit{occ}_c \geq \mathit{occ}$,
and in general $\mathit{occ}_c$ cannot be upper bounded by
any function of $\mathit{occ}$.
In contrast, as can be seen in Theorem~\ref{theo:dynamic_index},
the pattern search time of our index is proportional to the
number $\mathit{occ}$ of occurrences of a query pattern $P$.
This became possible due to our discovery of 
a new property of the signature encoding~\cite{LongAlstrup}
(stated in Lemma~\ref{lem:pattern_occurrence_lemma1}).

As another application of signature encodings, 
Nishimoto et al. showed that signature encodings 
for a dynamic string $T$ can support Longest Common Extension (LCE) queries
on $T$ efficiently in compressed space~\cite{DBLP:journals/corr/NishimotoIIBT16} (Lemma~\ref{lem:sub_operation_lemma}). 
They also showed signature encodings can be updated in compressed space (Lemma~\ref{lem:theorem2}). 
Our algorithm uses properties of signature encodings shown in~\cite{DBLP:journals/corr/NishimotoIIBT16}, 
more precisely, Lemmas~\ref{lem:common_sequence2}-\ref{lem:sub_operation_lemma} and~\ref{lem:theorem2},  
but Lemma~\ref{lem:pattern_occurrence_lemma1} is a new property of signature encodings not described in~\cite{DBLP:journals/corr/NishimotoIIBT16}. 

In relation to our problem, 
there exists the library management problem of maintaining 
a text collection (a set of text strings) allowing for insertion/deletion of texts 
(see~\cite{DBLP:journals/corr/MunroNV15} for recent work).
While in our problem a single text is edited by insertion/deletion of substrings,
in the library management problem a text can be inserted to or deleted from the collection.
Hence, algorithms for the library management problem cannot be directly
applied to our problem. 

\subsection{Computing LZ77 factorization in compressed space.}
As an application of our dynamic compressed index,
we present a new LZ77 factorization algorithm working in compressed space.

The Lempel-Ziv77 (LZ77) factorization is defined as follows. 
\begin{definition}[Lempel-Ziv77 factorization~\cite{LZ77}]
	The Lempel-Ziv77 (LZ77) factorization of a string $s$ without self-references is 
	a sequence $f_1, \ldots, f_z$ of non-empty substrings of $s$ 
	such that $s = f_1 \cdots f_z$,
	$f_1 = s[1]$, 
	and for $1 < i \leq z$, if the character $s[|f_1..f_{i-1}|+1]$ does not occur in $s[|f_1..f_{i-1}|]$, then $f_i = s[|f_1..f_{i-1}|+1]$, otherwise $f_i$ is the longest prefix of $f_i \cdots f_z$ which occurs in $f_1 \cdots f_{i-1}$. 
	The size of the LZ77 factorization $f_1, \ldots, f_z$ of string $s$
	is the number $z$ of factors in the factorization.
\end{definition}
Although the primary use of LZ77 factorization is data compression,
it has been shown that it is a powerful tool for many string processing problems~\cite{DBLP:journals/jda/GagieGP15,DBLP:conf/latin/GagieGKNP14}.
Hence the importance of algorithms to compute LZ77 factorization is growing.
Particularly, in order to apply algorithms to large scale data,
reducing the working space is an important matter.
In this paper, we focus on LZ77 factorization algorithms working in \emph{compressed space}. 

The following is our main result.
\begin{theorem}\label{theo:lzfac} 
	Given the signature encoding $\mathcal{G}$ of size $w$ for a string $T$ of length $N$, 
	we can compute the LZ77 factorization of $T$ in 
	$O(z \log w \log ^3 N(\log^* M)^2)$ time and $O(w)$ working space 
	where $z$ is the size of the LZ77 factorization of $T$.
\end{theorem}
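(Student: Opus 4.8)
The plan is to compute the factorization greedily from left to right while maintaining a dynamic index (Theorem~\ref{theo:dynamic_index}) over the \emph{already-processed prefix}. Concretely, when the parsing has reached position $p$, so that $T[1..p-1]$ is factorized, the index stores exactly $T[1..p-1]$, and the task is to find $\ell^*$, the length of the longest prefix of $T[p..N]$ that occurs in $T[1..p-1]$. The key structural observation is that, since the index contains only the prefix, \emph{every} occurrence it recognizes necessarily starts before $p$ and, being of length $\ell\le p-1$, fits entirely within $T[1..p-1]$; this is precisely the non-self-referencing LZ77 condition. Hence computing the longest previous factor collapses to a pure ``does this prefix occur in the current index?'' test, and no position-constrained queries are needed.

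To find $\ell^*$ I would binary search on the candidate length $\ell$ to locate the largest $\ell$ (possibly $0$) for which $T[p..p+\ell-1]$ occurs in the index. Occurrence is monotone in $\ell$, since if $T[p..p+\ell-1]$ occurs before $p$ then so does every shorter prefix at the same starting position, so $O(\log N)$ existence tests suffice. The crucial point is to make each test cheap. Because $T[p..p+\ell-1]$ is a substring of $T$ and $\mathcal{G}$ is given, I obtain its signature representation, a sequence of $O(\log N \log^* M)$ signatures, directly from $\mathcal{G}$ (using the substring operations of Lemmas~\ref{lem:common_sequence2}--\ref{lem:sub_operation_lemma}) rather than re-parsing the raw characters; this eliminates the $O(|P| f_{\mathcal{A}})$ term of Theorem~\ref{theo:dynamic_index}. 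Moreover, since only existence matters, I run merely the locating phase of the search and never pay the $O(\mathit{occ}\log N)$ reporting cost. Each test then costs $O(\log w \log^2 N (\log^* M)^2)$, so the binary search for one factor costs $O(\log w \log^3 N (\log^* M)^2)$.

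Once $\ell^*$ is determined, I emit the factor (namely $T[p..p+\ell^*-1]$, or the fresh character $T[p]$ when $\ell^* = 0$) and extend the index so that it stores the longer prefix. As the appended block is itself a substring of $T$, I invoke the faster substring-insertion bound of Theorem~\ref{theo:dynamic_index}, costing $O(\log w (\log N \log^* M)^2)$ amortized per factor; over the $z$ factors this is dominated by the binary-search cost. Since the indexed prefix is always a substring of $T$, the index together with $\mathcal{G}$ occupies $O(w)$ space throughout, with updatability guaranteed by Lemma~\ref{lem:theorem2}. Summing over all $z$ factors yields the claimed $O(z \log w \log^3 N (\log^* M)^2)$ time and $O(w)$ working space.

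The main obstacle, and the step I would verify most carefully, is the cheap existence test: it simultaneously requires (i) feeding the index a pattern already in signature-encoded form so as to drop the $O(|P| f_{\mathcal{A}})$ parsing term, which must be reconciled with the search interface underlying Theorem~\ref{theo:dynamic_index}, and (ii) halting that search after the locating phase so as to drop the $O(\mathit{occ}\log N)$ reporting term. If either term survived, the total would degrade into a dependence on $N$ through $\sum_i \ell_i = N$, or on the total number of pattern occurrences, thereby destroying the $z$-sensitive bound.
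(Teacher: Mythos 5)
Your proposal takes a genuinely different route from the paper. You maintain a \emph{dynamic} index (Theorem~\ref{theo:dynamic_index}) over the already-parsed prefix $T[1..p-1]$ and binary-search the factor length via existence queries. The paper instead never builds a prefix index: it keeps the single \emph{static} signature encoding $\mathcal{G}$ of the whole text, augments each variable $e$ with the weight $e.{\rm min}$ (the leftmost position at which the corresponding primary-occurrence split point occurs in $T$), and replaces range \emph{reporting} by range \emph{minimum} queries (Lemma~\ref{lem:minimum_weight_report}). This turns ``does $T[j..j+k-1]$ occur starting before $j$'' into the query $\mathit{Fst}(j,k)$, answered by taking the minimum of $\mathit{FstOcc}(P,i)-i$ over the $O(\log k \log^* M)$ split points $i \in \mathcal{P}$ of Lemma~\ref{lem:pattern_occurrence_lemma1} and comparing with $j$ (Lemma~\ref{lem:prevlemma}); Fact~\ref{fact:lz77} (doubling then binary search, $O(\log|f_i|)$ calls per factor) gives the total bound. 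The paper's route buys two things your route has to work for: it needs no dynamic updates at all, and the position constraint ``before $j$'' is absorbed into the weights rather than into which text is indexed.

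There are concrete gaps in your version. First, the cheap existence test is not reconciled: $\uniq{P}$ read off $\mathcal{G}$ (Lemma~\ref{lem:ComputeShortCommonSequence}) consists of signatures of the \emph{full-text} encoding, while the query ranges $x_1^{(P,j)},\dots,y_2^{(P,j)}$ live on the $\mathcal{X},\mathcal{Y}$ of the \emph{prefix} index; making the LCE-based binary search (Lemma~\ref{lem:sub_operation_lemma}) compare a prefix-index variable against a substring of $T$ requires the two encodings to share one signature dictionary, and then you must further ensure the range structure contains only variables actually reachable from the prefix's start symbol, or a rectangle can be non-empty because of a variable occurring only in the unparsed part of $T$ --- a false positive that would break correctness. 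You flag this step yourself but do not resolve it. Second, LZ77 factors must be output as (source position, length) pairs, as the paper notes at the start of Section~\ref{sec:appendix_applications}; a pure emptiness test yields only $\ell^*$, so you still need to extract one primary occurrence and one element of $\mathit{vOcc}$ per factor (an extra $O(\log N)$ per factor --- harmless for the bound, but it is also exactly the position needed to perform your substring-insertion update, so it cannot be skipped). Neither gap is obviously fatal, but both require machinery beyond what you have written, whereas the paper's static range-minimum formulation sidesteps them entirely.
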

In~\cite{DBLP:journals/corr/NishimotoIIBT16}, it was shown that 
the signature encoding $\mathcal{G}$ can be constructed efficiently from various types of inputs,
in particular, in $O(N f_{\mathcal{A}})$ time and $O(w)$ working space from uncompressed string $T$.
Therefore we can compute LZ77 factorization of a given $T$ of length $N$ 
in $O(N f_\mathcal{A} + z \log w \log ^3 N (\log^* M)^2)$ time and $O(w)$ working space.

\subsubsection{Related work.}
Goto et al.~\cite{DBLP:journals/corr/abs-1107-2729} showed how,
given the grammar-like representation for string $T$ generated
by the LCA algorithm~\cite{DBLP:journals/ieicet/SakamotoMKS09},
to compute the LZ77 factorization of $T$
in $O(z \log^2 m \log^3 N + m \log m \log^3 N)$ time and $O(m \log^2 m)$ space, 
where $m$ is the size of the given representation.
Sakamoto et al.~\cite{DBLP:journals/ieicet/SakamotoMKS09} claimed that
$m = O(z \log N \log^* N)$, however, it seems that in this bound they do not consider 
the production rules to represent maximal runs of non-terminals in the derivation tree.
The bound we were able to obtain with the best of our knowledge and 
understanding is $m = O(z \log^2 N \log^* N)$,
and hence our algorithm seems to use less space than the algorithm of Goto et al.~\cite{DBLP:journals/corr/abs-1107-2729}.
Recently, Fischer et al.~\cite{FGGK15} showed
a Monte-Carlo randomized algorithms to compute  
an approximation of the LZ77 factorization
with at most $2z$ factors in $O(N \log N)$ time,
and another approximation with at most $(i+\epsilon)z$ factors
in $O(N \log^2 N)$ time for any constant $\epsilon > 0$,
using $O(z)$ space each.

Another line of research is LZ77 factorization working in compressed space in terms of Burrows-Wheeler transform (BWT) based methods.
Policriti and Prezza recently proposed algorithms running in
$N H_0 + o(N \log |\Sigma|) + O(|\Sigma| \log N)$ bits of space and $O(N \log N)$ time~\cite{DBLP:conf/spire/PolicritiP15},
or $O(R \log N)$ bits of space and $O(N \log R)$ time~\cite{DBLP:conf/dcc/PolicritiP16},
where $R$ is the number of runs in the BWT of the reversed string of $T$.
Because their and our algorithms are established on different measures of compression,
they cannot be easily compared.
For example, our algorithm is more space efficient than the algorithm in~\cite{DBLP:conf/dcc/PolicritiP16} when $w = o(R)$,
but it is not clear when it happens.

\section{Preliminaries} \label{sec:preliminary}

\subsection{Strings}

Let $\Sigma$ be an ordered alphabet.
An element of $\Sigma^*$ is called a string.
For string $w = xyz$,
$x$, $y$ and $z$ are called a prefix, substring, and suffix of $w$, respectively.
The length of string $w$ is denoted by $|w|$.
The empty string $\varepsilon$ is a string of length $0$.
Let $\Sigma^+ = \Sigma^* - \{\varepsilon\}$.
For any $1 \leq i \leq |w|$, $w[i]$ denotes the $i$-th character of $w$.
For any $1 \leq i \leq j \leq |w|$,
$w[i..j]$ denotes the substring of $w$ that begins at position $i$
and ends at position $j$.
Let $w[i..] = w[i..|w|]$ and $w[..i] = w[1..i]$ for any $1 \leq i \leq |w|$.
For any string $w$, let $\rev{w}$ denote the reversed string of $w$,
that is, $\rev{w} = w[|w|] \cdots w[2]w[1]$. 
For any strings $w$ and $u$, 
let $\lcp{w}{u}$ (resp. $\lcs{w}{u}$) denote the length of 
the longest common prefix (resp. suffix) of $w$ and $u$.
Given two strings $s_1, s_2$ and two integers $i, j$, let $\LCEQ(s_1, s_2, i, j)$ denote a query which returns $\lcp{s_1[i..|s_1|]}{s_2[j..|s_2|]}$.
For any strings $p$ and $s$, let $\Occ(p,s)$
denote all occurrence positions of $p$ in $s$,
namely, $\Occ(p, s) = \{i \mid p = s[i..i+|p|-1], 1 \leq i \leq |s|-|p|+1\}$.
Our model of computation is the unit-cost word RAM with machine word 
size of $\Omega(\log_2 M)$ bits,
and space complexities will be evaluated by the number of machine words.
Bit-oriented evaluation of space complexities can be obtained with 
a $\log_2 M$ multiplicative factor.

\subsection{Context free grammars as compressed representation of strings}

\vspace*{0.5pc}
\noindent \textbf{Straight-line programs.}
A \emph{straight-line program} (\emph{SLP}) is a context free grammar in the Chomsky normal form
that generates a single string.
Formally, an SLP that generates $T$ is
a quadruple $\mathcal{G} = (\Sigma, \mathcal{V}, \mathcal{D}, S)$, 
such that
$\Sigma$ is an ordered alphabet of terminal characters;
$\mathcal{V} = \{ X_1, \ldots, X_{n} \}$ is a set of positive integers, called \emph{variables};
$\mathcal{D} = \{X_i \rightarrow \mathit{expr}_i\}_{i = 1}^{n}$ is a set of \emph{deterministic productions} (or \emph{assignments})
with each $\mathit{expr}_i$ being either of form $X_\ell X_r~(1 \leq \ell, r < i)$, or a single character $a \in \Sigma$;
and $S := X_{n} \in \mathcal{V}$ is the start symbol which derives the string $T$.
We also assume that the grammar neither contains \emph{redundant} variables (i.e., there is at most one assignment whose righthand side is $\mathit{expr}$)
nor \emph{useless} variables (i.e., every variable appears at least once in the derivation tree of $\mathcal{G}$).
The \emph{size} of the SLP $\mathcal{G}$ is the number $n$ of productions in $\mathcal{D}$.
In the extreme cases the length $N$ of the string $T$ can be as large as $2^{n-1}$,
however, it is always the case that $n \geq \log_2 N$.
See also Example~\ref{ex:SLP}.

Let $\mathit{val}: \mathcal{V} \rightarrow \Sigma^+$ be the function
which returns the string derived by an input variable.
If $s = \val{X}$ for $X \in \mathcal{V}$,
then we say that the variable $X$ \emph{represents} string $s$.
For any variable sequence $y \in \mathcal{V}^{+}$,
let $\valp{y} = \val{y[1]} \cdots \val{y[|y|]}$.
For any variable $X_i$ with $X_i \rightarrow X_{\ell}X_{r} \in \mathcal{D}$, 
let $X_i.{\rm left} = \val{X_{\ell}}$ and $X_i.{\rm right} = \val{X_{r}}$,
which are called 
the \emph{left string} and the \emph{right string} of $X_i$, respectively.
For two variables $X_i, X_j \in \mathcal{V}$, we say that $X_i$ occurs at position $c$ in $X_j$ 
if there is a node labeled with $X_i$ in the derivation tree of $X_j$
and the leftmost leaf of the subtree rooted at that node labeled with $X_i$
is the $c$-th leaf in the derivation tree of $X_j$. 
We define the function $\vOcc(X_i,X_j)$ which returns all positions of $X_i$ in the derivation tree of $X_j$. 

\vspace*{0.5pc}
\noindent \textbf{Run-length straight-line programs.}
We define \emph{run-length SLPs}, (\emph{RLSLPs}) as an extension to SLPs,
which allow \emph{run-length encodings} in the righthand sides of productions, i.e.,
$\mathcal{D}$ might contain a production $X \rightarrow \hat{X}^{k} \in \mathcal{V} \times \mathcal{N}$.
The \emph{size} of the RLSLP is still the number of productions in $\mathcal{D}$
as each production can be encoded in constant space.
Let $\mathit{Assgn}_{\mathcal{G}}$ be the function such that
$\mathit{Assgn}_{\mathcal{G}}(X_i) = \mathit{expr_i}$ iff $X_i \rightarrow \mathit{expr_i} \in \mathcal{D}$.
Also, let $\mathit{Assgn}^{-1}_{\mathcal{G}}$ denote the reverse function of $\mathit{Assgn}_{\mathcal{G}}$.
When clear from the context, 
we write $\mathit{Assgn}_{\mathcal{G}}$ and $\mathit{Assgn}^{-1}_{\mathcal{G}}$
as $\mathit{Assgn}$ and $\mathit{Assgn}^{-1}$, respectively.

	We define the left and right strings for any variable
	$X_i \rightarrow X_{\ell}X_r \in \mathcal{D}$ in a similar way to SLPs. 
	Furthermore, for any $X \rightarrow \hat{X}^{k} \in \mathcal{D}$,
	let $X.{\rm left} = \val{\hat{X}}$ and $X.{\rm right} = \val{\hat{X}}^{k-1}$.

\vspace*{0.5pc}
\noindent \textbf{Representation of RLSLPs.}
For an RLSLP $\mathcal{G}$ of size $w$, 
we can consider a DAG of size $w$ as a compact representation of the derivation trees of variables in $\mathcal{G}$.
Each node represents a variable $X$ in $\mathcal{V}$ and stores $|\val{X}|$
and out-going edges represent the assignments in $\mathcal{D}$:
For an assignment $X_i \rightarrow X_{\ell}X_r \in \mathcal{D}$,
there exist two out-going edges from $X_i$ to its ordered children $X_{\ell}$ and $X_r$;
and for $X \rightarrow \hat{X}^{k} \in \mathcal{D}$, 
there is a single edge from $X$ to $\hat{X}$ with the multiplicative factor $k$. 
	For $X \in \mathcal{V}$, let $\mathit{parents}(X)$ be the set of variables 
	which have out-going edge to $X$ in the DAG of $\mathcal{G}$.
	To compute $\mathit{parents}(X)$ for $X \in \mathcal{V}$ in linear time, 
	we let $X$ have a doubly-linked list of length $|\mathit{parents}(X)|$ to represent $\mathit{parents}(X)$:
	Each element is a pointer to a node for $X' \in \mathit{parents}(X)$ (the order of elements is arbitrary).
	Conversely, we let every parent $X'$ of $X$ have the pointer to the corresponding element in the list.
See also Example~\ref{ex:tree}.

\section{Signature encoding}\label{sec:Framework}

Here, we recall the \emph{signature encoding} 
first proposed by Mehlhorn et al.~\cite{DBLP:journals/algorithmica/MehlhornSU97}.
Its core technique is \emph{locally consistent parsing} defined as follows:

\begin{lemma}[Locally consistent parsing~\cite{DBLP:journals/algorithmica/MehlhornSU97,LongAlstrup}]\label{lem:CoinTossing}
	Let $W$ be a positive integer.
	There exists a function $f: [0..W]^{\log^* W + 11} \rightarrow \{0,1\}$ such that,
	for any $p \in [1..W]^n$ with $n \geq 2$ and $p[i] \neq p[i+1]$ for any $1 \leq i < n$,
	the bit sequence $d$ defined by 
	$d[i] = f(\tilde{p}[i-\deltaLR{L}], \ldots, \tilde{p}[i+\deltaLR{R}])$ for $1 \leq i \leq n$, satisfies:
	$d[1] = 1$; 
	$d[n] = 0$;
	$d[i] + d[i+1] \leq 1$ for $1 \leq i < n$; 
	and $d[i] + d[i+1] + d[i+2] + d[i+3] \geq 1$ for any $1 \leq i < n-3$;
	where $\deltaLR{L} = \log ^*W + 6$, $\deltaLR{R} = 4$, and $\tilde{p}[j] = p[j]$ for all $1 \leq j \leq n$, $\tilde{p}[j] = 0$ otherwise. 
	Furthermore, we can compute $d$ in $O(n)$ time using a precomputed table of size $o(\log W)$, which can be computed in $o(\log W)$ time.
\end{lemma}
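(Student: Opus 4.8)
The plan is to construct $f$ by \emph{deterministic coin tossing} in the style of Cole--Vishkin, which repeatedly shrinks the alphabet while preserving the property that adjacent symbols differ, and then to finish with a purely local marking rule. Concretely, I would build $f$ as the composition of $\Theta(\log^* W)$ identical reduction rounds, a constant number of recoloring passes, and one final ``mark the local maxima'' step. The four required properties of $d$ then split into two independent concerns: the \emph{locality} of $f$ (that $d[i]$ depends only on the window $\tilde p[i-\deltaLR{L}\,..\,i+\deltaLR{R}]$), which is a bookkeeping statement about how far each round's dependency reaches; and the \emph{combinatorial} properties ($d[1]=1$, $d[n]=0$, no two adjacent $1$s, and at least one $1$ in every window of four), which follow from the marking rule applied to a constant-size alphabet.

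For the reduction round I would use the classical map: for $a\neq b$ let $k$ be the index of the least significant bit in which $a$ and $b$ differ and set $g(a,b)=2k+\mathrm{bit}_k(a)$. The two facts to verify are (i) \emph{range shrinkage}: if $a,b\in[0..x)$ then $k<\lceil\log_2 x\rceil$, so $g(a,b)<2\lceil\log_2 x\rceil$; and (ii) \emph{invariant preservation}: if $a\neq b$ and $b\neq c$ then $g(a,b)\neq g(b,c)$ (if the differing-bit indices coincide the tie is broken by $\mathrm{bit}_k(a)\neq\mathrm{bit}_k(b)$, otherwise by the $2k$ term). Defining $p^{(t)}[i]=g(p^{(t-1)}[i-1],p^{(t-1)}[i])$, each round extends the look-behind by exactly one position and keeps the look-ahead at $0$; the $0$-padding $\tilde p$ at the original level is legitimate because every real symbol is $\ge 1$, so adjacency-distinctness survives into the padding, and deeper rounds use internal sentinels to the same effect. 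By (i) the alphabet drops from $W$ to a constant in $\log^* W + O(1)$ rounds, which is what contributes the $\log^* W$ term to $\deltaLR{L}$; a further constant number of local recoloring passes brings the alphabet down to $\{0,1,2\}$, and the local-maxima step needs only a constant window. The fixed offsets $\deltaLR{L}=\log^* W+6$ and $\deltaLR{R}=4$ are then just the accumulated left and right reaches, and the arity $\deltaLR{L}+\deltaLR{R}+1=\log^* W+11$ of $f$ matches.

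The heart of the verification is the final marking. Over a sequence $q$ on the alphabet $\{0,1,2\}$ with no two equal neighbors I would set $d[i]=1$ iff $q[i]$ is a strict local maximum. Then $d[i]+d[i+1]\le 1$ is immediate, since a local maximum cannot be adjacent to another. For the density property I would argue that on a three-letter alphabet every maximal monotone run has length at most three, so between two consecutive local maxima there are at most three unmarked positions (the extremal case being $2,1,0,1,2$); hence consecutive $1$s are at most four apart, and every length-four window contains one of the two endpoints, giving $d[i]+d[i+1]+d[i+2]+d[i+3]\ge 1$. The boundary conditions $d[1]=1$ and $d[n]=0$ I would obtain from the sentinel padding together with a dedicated boundary case folded into $f$, detectable from the number of leading and trailing padding zeros (unambiguous because interior symbols never equal $0$). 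The step I expect to be the main obstacle is exactly this reconciliation at the ends: I must check that forcing $d[1]=1$ and $d[n]=0$ creates neither two adjacent $1$s nor an empty length-four window next to the boundary, and that the recoloring-to-$\{0,1,2\}$ passes really do preserve adjacency-distinctness so that the run-length-$\le 3$ bound is available. These constant-size case analyses are where the exact values $6$ and $4$ get pinned down.

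Finally, for the efficiency claim I would compute $d$ round by round but avoid an explicit $\log^* W$ factor. The first round maps the huge alphabet $[0..W]$ down to $[0..O(\log W)]$; each evaluation of $g$ reduces to finding the least significant set bit of $a\oplus b$, which on the word RAM is $O(1)$ using standard bit manipulation aided by a small precomputed table, giving $O(n)$ for the first round. After that the alphabet is small, and I would precompute a table that, given the short window of reduced symbols, returns in a single lookup the output bit produced by the composed remaining rounds and the marking, so that the tail also costs $O(1)$ per position and $O(n)$ in total. Bounding the size of this table and its preprocessing time by $o(\log W)$ is the routine-but-delicate accounting I would carry out last, choosing how much of the first round to tabulate versus compute by word operations so that both the table size and its build time stay within the stated budget.
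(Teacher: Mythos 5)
The paper does not prove this lemma itself --- it is imported verbatim from Mehlhorn et al.\ and Alstrup et al., and your reconstruction (deterministic coin tossing to shrink the alphabet in $\log^* W+O(1)$ rounds, constant recoloring passes down to $\{0,1,2\}$, local-maxima marking, with the window offsets obtained by summing the per-round dependency reaches) is exactly the argument those references give. Your plan is sound, including the correctly identified delicate points (boundary reconciliation for $d[1]=1$, $d[n]=0$, and performing at least two explicit reduction rounds before tabulating the tail so the table stays $o(\log W)$), so I have nothing to flag.
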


For the bit sequence $d$ of Lemma~\ref{lem:CoinTossing},
we define the function $\mathit{Eblock}_d(p)$ that 
decomposes an integer sequence $p$ according to $d$:
$\encblockd{p}$ decomposes $p$ into a sequence
$q_1, \ldots, q_j$ of substrings called \emph{blocks} of $p$,
such that $p = q_1 \cdots q_j$ and 
$q_i$ is in the decomposition iff $d[|q_1 \cdots q_{i-1}|+1] = 1$
for any $1 \leq i \leq j$.
Note that each block is of length from two to four by the property of $d$, i.e.,
$2 \leq |q_i| \leq 4$ for any $1 \leq i \leq j$.
Let $|\encblockd{p}| = j$ and let $\encblockd{s}[i] = q_i$.
We omit $d$ and write $\encblock{p}$ when it is clear from the context, 
and we use implicitly the bit sequence created by Lemma~\ref{lem:CoinTossing} as $d$. 

We complementarily use run-length encoding to get a sequence to which $\mathit{Eblock}$ can be applied.
Formally, for a string $s$, 
let $\encpow{s}$ be the function which 
groups each maximal run of same characters $a$ as $a^k$,
where $k$ is the length of the run. 
$\encpow{s}$ can be computed in $O(|s|)$ time.
Let $|\encpow{s}|$ denote the number of maximal runs of same characters in $s$ and let $\encpow{s}[i]$ denote $i$-th maximal run in $s$.
See also Example~\ref{ex:Encblock}.

The signature encoding is the RLSLP $\mathcal{G} = (\Sigma, \mathcal{V}, \mathcal{D}, S)$,
where the assignments in $\mathcal{D}$ are determined by
recursively applying $\mathit{Eblock}$ and $\mathit{Epow}$ to $T$
until a single integer $S$ is obtained.
We call each variable of the signature encoding a \emph{signature},
and use $e$ (for example, $e_i \rightarrow e_{\ell}e_{r} \in \mathcal{D}$)
instead of $X$ to distinguish from general RLSLPs.

For a formal description,
let $E := \Sigma \cup \mathcal{V}^{2} \cup \mathcal{V}^{3} \cup \mathcal{V}^{4} \cup (\mathcal{V} \times \mathcal{N})$
and let $\mathit{Sig}: E \rightarrow \mathcal{V}$ be the function such that:
$\mathit{Sig}(\mathit{x}) = e$ if $(e \rightarrow \mathit{x}) \in \mathcal{D}$;
$\mathit{Sig}(\mathit{x}) = \mathit{Sig}( \mathit{Sig}(\mathit{x}[1..|\mathit{x}|-1]) \mathit{x}[|\mathit{x}|])$
if $\mathit{x} \in \mathcal{V}^{3} \cup \mathcal{V}^{4}$;
or otherwise undefined.
Namely, the function $\mathit{Sig}$ returns, if any,
the lefthand side of the corresponding production of $\mathit{x}$
by recursively applying the $\mathit{Assgn}^{-1}$ function from left to right.
For any $p \in E^*$,
let $\mathit{Sig}^{+}(p) = \sig{p[1]} \cdots \sig{p[|p|]}$.

The signature encoding of string $T$ is defined by the following $\mathit{Shrink}$ and $\mathit{Pow}$ functions:
$\shrink{t}{T} = \mathit{Sig}^{+}(T)$ for $t = 0$, and $\shrink{t}{T} = \mathit{Sig}^{+}(\encblock{\pow{t-1}{T}})$ for $0 < t \leq h$; and
$\pow{t}{T} = \mathit{Sig}^{+}(\encpow{\shrink{t}{T}})$ for $0 \leq t \leq h$;
where $h$ is the minimum integer satisfying $|\pow{h}{T}| = 1$.
Then, the start symbol of the signature encoding is $S = \pow{h}{T}$.
We say that a node is in \emph{level} $t$ in the derivation tree of $S$
if the node is produced by $\shrink{t}{T}$ or $\pow{t}{T}$.
The height of the derivation tree of the signature encoding of $T$ is
$O(h) = O(\log |T|)$.
For any $T \in \Sigma^+$,
let $\id{T} = \pow{h}{T} = S$, i.e.,
the integer $S$ is the signature of $T$.
We let $N \leq M/4$. 
More specifically, $M = 4N$ if $T$ is static, 
and $M/4$ is the upper bound of the length of $T$ if we consider updating $T$ dynamically. 
Since all signatures are in $[1..M-1]$, we set $W = M$ in Lemma~\ref{lem:CoinTossing} used by the signature encoding.
In this paper, we implement signature encodings by the DAG of RLSLP introduced in Section~\ref{sec:preliminary}.
See also Example~\ref{ex:signature_dictionary} and Figure~\ref{fig:SignatureTree}.

\subsection{Commmon sequences}\label{sec:CommonSeq}

Here, we recall the most important property of the signature encoding,
which ensures the existence of common signatures to all occurrences of same substrings by the following lemma.

\begin{lemma}[common sequences~\cite{18045,DBLP:journals/corr/NishimotoIIBT16}]\label{lem:common_sequence2}
	Let $\mathcal{G} = (\Sigma, \mathcal{V}, \mathcal{D}, S)$ be a signature encoding for a string $T$. 
	Every substring $P$ in $T$ is represented 
	by a signature sequence $\mathit{Uniq}(P)$ in $\mathcal{G}$ for a string $P$, 
	where $|\encpow{\uniq{P}}| = O(\log |P| \log^* M)$.  
\end{lemma}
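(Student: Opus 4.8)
The plan is to define $\uniq{P}$ as the portion of the signature parse that is shared by \emph{all} occurrences of $P$ in $T$, and to bound its size by tracking, level by level, how deeply the surrounding context can influence the parse of $P$. First I would fix an occurrence of $P$ inside $T$ and look at the restrictions of $\shrink{t}{T}$ and $\pow{t}{T}$ to the range of the derivation that covers this occurrence. The engine of the argument is Lemma~\ref{lem:CoinTossing}: a block boundary $d[i]$ is a function only of the window $\tilde p[i-\deltaLR{L}],\dots,\tilde p[i+\deltaLR{R}]$ of width $\deltaLR{L}+\deltaLR{R}+1 = O(\log^* M)$. Consequently, for any two occurrences of $P$ the parses at a given level must agree on every position whose window lies entirely inside $P$, and they can differ only on a left fringe of width $\deltaLR{L}=O(\log^* M)$ and a right fringe of width $\deltaLR{R}=O(1)$. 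The run-length step $\encpow{\cdot}$ adds only an $O(1)$ boundary ambiguity, since only the first and last maximal run of the restricted range can be lengthened or shortened by the context.

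I would then make the level recursion explicit. Let the \emph{determined interior} at level $t$ be the maximal contiguous block of signatures covering $P$ that is forced by $P$ alone. Passing to level $t+1$ (apply $\encpow{\cdot}$, then $\encblock{\cdot}$, then $\sigp{\cdot}$) requires reading only $O(\log^* M)$ signatures beyond each end of the interior to compute the new boundaries, so the interior loses at most $O(\log^* M)$ of its level-$t$ symbols on the left and $O(1)$ on the right. The point I must verify carefully is that this loss does \emph{not} accumulate geometrically: once the interior signatures at level $t$ are pinned down, the windows at level $t+1$ reach only a bounded distance into the undetermined region, so the interior shrinks by an additive $O(\log^* M)$ symbols per level rather than by a constant fraction. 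Since every application of $\encblock{\cdot}$ groups at least two symbols, the length of the parse covering $P$ falls by a factor at least $2$ per level; hence after $O(\log|P|)$ levels the interior is reduced to an $O(1)$-length \emph{core}, and above this level I stop, as these core signatures are the highest ones whose derivation remains inside $P$.

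Finally I would set $\uniq{P}$ to be the frontier consisting of this core together with, for each level below it, the $O(\log^* M)$ signatures that lie in the determined interior of that level but are not absorbed into the interior one level higher. Concatenating their derived strings reconstructs $P$ exactly, so $\valp{\uniq{P}} = P$, and by construction every signature of $\uniq{P}$ is common to all occurrences of $P$. The size bound then follows by summation: $O(\log|P|)$ levels each contribute $O(\log^* M)$ fringe signatures, the core contributes $O(1)$, and applying $\encpow{\cdot}$ can only reduce the count, whence $|\encpow{\uniq{P}}| = O(\log|P|\log^* M)$. The main obstacle I anticipate is the boundary bookkeeping: one must define the left and right fringes shed at successive levels so that they tile $P$ with neither gaps nor overlaps and mesh correctly with the run-length encoding, whose maximal runs may straddle the interface between two levels, and one must prove that the penetration depth stays $O(\log^* M)$ uniformly in $t$ rather than compounding — this non-accumulation is precisely what yields the factor $\log|P|\log^* M$ instead of something larger.
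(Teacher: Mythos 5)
Your proposal is correct and follows essentially the same route as the paper: it rederives the $\mathit{XShrink}/\mathit{XPow}$ recursion of Definition~\ref{def:xshrink}, using the $O(\log^* M)$-width window of Lemma~\ref{lem:CoinTossing} to argue that each level sheds only an additive $\Delta_L+O(1)$ left fringe and $\Delta_R+O(1)$ right fringe (plus one run at each end for $\encpow{\cdot}$), with the halving property of $\encblock{\cdot}$ giving $h^P=O(\log|P|)$ levels. The non-accumulation issue you flag is exactly what the paper's choice of $L_t^P$ and $R_t^P$ relative to the already-determined $\mathit{XPow}_{t-1}^P$ resolves, so the bound $|\encpow{\uniq{P}}|=O(\log|P|\log^* M)$ follows by the same summation.
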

$\mathit{Uniq}(P)$, which we call the \emph{common sequence} of $P$, is defined by the following.

\begin{definition}\label{def:xshrink}
For a string $P$, let
\begin{eqnarray*}
 \mathit{XShrink}_t^{P} &=&
  \begin{cases}
   \mathit{Sig}^{+}(P) & \mbox{ for } t = 0, \\
   \mathit{Sig}^{+}(\encblockd{\mathit{XPow}_{t-1}^{P}}[|L_{t}^{P}|..|\mathit{XPow}_{t-1}^{P}|-|R_{t}^{P}|]) & \mbox{ for } 0 < t \leq h^{P}, \\
  \end{cases} \\
\mathit{XPow}_t^{P} &=& \mathit{Sig}^{+}(\encpow{\mathit{XShrink}_t^{P}[|\hat{L}_{t}^{P}| + 1..|\mathit{XShrink}_t^{P}| - |\hat{R}_{t}^{P}]}|) \ \mbox{ for } 0 \leq t < h^{P}, \mbox{ where}
\end{eqnarray*}
\begin{itemize}
  \item $L_{t}^{P}$ is the shortest prefix of $\mathit{XPow}_{t-1}^{P}$ of length at least $\deltaLR{L}$ such that $d[|L_{t}^{P}|+1] = 1$,
  \item $R_{t}^{P}$ is the shortest suffix of $\mathit{XPow}_{t-1}^{P}$ of length at least $\deltaLR{R}+1$ such that $d[|d| - |R_{t}^{P}| + 1] = 1$,
  \item $\hat{L}_{t}^{P}$ is the longest prefix of $\mathit{XShrink}_t^{P}$ such that $|\encpow{\hat{L}_{t}^{P}}|  = 1$,
  \item $\hat{R}_{t}^{P}$ is the longest suffix of $\mathit{XShrink}_t^{P}$ such that $|\encpow{\hat{R}_{t}^{P}}| = 1$, and
  \item $h^{P}$ is the minimum integer such that $|\encpow{\mathit{XShrink}_{h^{P}}^{P}}| \leq \Delta_{L} + \Delta_{R} + 9$.
\end{itemize}
\end{definition}
Note that $\Delta_{L} \leq |L_{t}^{P}| \leq \Delta_{L} + 3$ and $\Delta_{R}+1 \leq |R_{t}^{P}| \leq \Delta_{R} + 4$ hold by the definition. 
Hence $|\xshrink{t+1}{P}| > 0$ holds if $|\encpow{\xshrink{t}{P}}| > \Delta_{L} + \Delta_{R} + 9$. 
Then,
\[ 
\mathit{Uniq}(P) = \hat{L}_{0}^{P}L_{0}^{P} \cdots 
\hat{L}_{h^{P}-1}^{P}L_{h^{P}-1}^{P}\mathit{XShrink}_{h^{P}}^{P}R_{h^{P}-1}^{P}\hat{R}_{h^{P}-1}^{P} \cdots R_{0}^{P}\hat{R}_{0}^{P}.
\] 

	We give an intuitive description of Lemma~\ref{lem:common_sequence2}. 
	Recall that the locally consistent parsing of Lemma~\ref{lem:CoinTossing}. 
	Each $i$-th bit of bit sequence $d$ of Lemma~\ref{lem:CoinTossing} for a given string $s$
	is determined by $s[i-\deltaLR{L}..i+\deltaLR{R}]$. 
	Hence, for two positions $i, j$ such that $P = s[i..i+k-1] = s[j..j+k-1]$ for some $k$, 
	$d[i+\deltaLR{L}..i+k-1-\deltaLR{R}] = d[j+\deltaLR{L}..j+k-1-\deltaLR{R}]$ holds, namely, 
	``internal'' bit sequences of the same substring of $s$ are equal. 
	Since each level of the signature encoding uses the bit sequence,
	all occurrences of same substrings in a string share same internal signature sequences,
	and this goes up level by level.
	$\xshrink{t}{P}$ and $\xpow{t}{P}$ represent signature sequences 
	which are obtained from only internal signature sequences of $\xpow{t-1}{T}$ and $\xshrink{t}{T}$, respectively. 
	This means that $\xshrink{t}{P}$ and $\xpow{t}{P}$ are always created over $P$.
	From such common signatures we take as short signature sequence as possible for $\mathit{Uniq}(P)$:
	Since $\valp{\pow{t-1}{P}} = \valp{L_{t-1}^{P}\xshrink{t}{P}R_{t-1}^{P}}$ and 
	$\valp{\shrink{t}{P}} = \valp{\hat{L}_{t}^{P}\xpow{t}{P}\hat{R}_{t}^{P}}$ hold, 
	$|\encpow{\mathit{Uniq}(P)}| = O(\log |P| \log^* M)$ and $\valp{\mathit{Uniq}(P)} = P$ hold.
	Hence Lemma~\ref{lem:common_sequence2} holds~(see also Figure~\ref{fig:CommonSequence})~\footnote{
		The common sequences are conceptually equivalent to
		the \emph{cores}~\cite{maruyama13:_esp} which are defined for the
		\emph{edit sensitive parsing} of a text,
		a kind of locally consistent parsing of the text.
	}. 

From the common sequences we can derive many useful properties of signature encodings like listed below
(see the references for proofs).

The number of ancestors of nodes corresponding to $\mathit{Uniq}(P)$ is upper bounded by:
\begin{lemma}[\cite{DBLP:journals/corr/NishimotoIIBT16}]\label{lem:ancestors}
  Let $\mathcal{G}$ be a signature encoding for a string $T$, 
  $P$ be a string, and let $\mathcal{T}$ be the derivation tree of a signature $e \in \mathcal{V}$. 
  Consider an occurrence of $P$ in $s$,
  and the induced subtree $X$ of $\mathcal{T}$
  whose root is the root of $\mathcal{T}$ and whose leaves are 
  the parents of the nodes representing $\uniq{P}$, where $s = \val{e}$.
  Then $X$ contains $O(\log^* M)$ nodes for every level and
  $O(\log |s| + \log |P| \log^* M)$ nodes in total.
\end{lemma}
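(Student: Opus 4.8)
The plan is to count the nodes of $X$ one level at a time, splitting the levels of $\mathcal{T}$ at $t = h^P$, the level of the top of the common sequence. First I would read off the shape of the frontier from Definition~\ref{def:xshrink}: the nodes representing $\uniq{P}$ form a staircase that, at each level $t$ with $0 \le t < h^P$, consists only of the boundary pieces $\hat{L}_t^P, L_t^P$ at the left and $R_t^P, \hat{R}_t^P$ at the right end of the occurrence of $P$, with the single contiguous block $\xshrink{h^P}{P}$ capping it at level $h^P$. Since $\deltaLR{L} = \log^* M + 6$ and $\deltaLR{R} = 4$, the definition forces $|L_t^P| \le \deltaLR{L}+3$ and $|R_t^P| \le \deltaLR{R}+4$, each of $\hat{L}_t^P, \hat{R}_t^P$ is a single run, and $|\encpow{\xshrink{h^P}{P}}| \le \deltaLR{L}+\deltaLR{R}+9$. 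Because a maximal run is generated by a single run-length production, the parent of an entire run piece is one node regardless of its length; hence the parents of the frontier pieces at any fixed level number $O(\log^* M)$, and over the $h^P + 1 = O(\log |P|)$ levels the frontier has $O(\log |P| \log^* M)$ parents in total.

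Next I would bound the width of $X$ (the number of its nodes) at each level. The leaves of $X$ are exactly these parents, and $X$ is the union of the root-to-leaf paths, i.e. the ancestor-closure of the leaves. For a level $t \le h^P$ I would argue that, besides the $O(\log^* M)$ parents located at that level, the only other level-$t$ nodes in $X$ are the ancestors of frontier pieces lying strictly below; since on each side these form a contiguous, monotone staircase squeezed against the boundary of $P$, their level-$t$ ancestors are a contiguous range of $O(\log^* M)$ nodes. For a level $t > h^P$ the entire frontier lies below, so the level-$t$ nodes of $X$ are precisely the ancestors of the $O(\log^* M)$ nodes that $X$ occupies at level $h^P$; these ancestors form a single contiguous range whose width cannot grow as $t$ increases. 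In both regimes the width is $O(\log^* M)$, giving the per-level bound.

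For the total I would simply sum the widths. The $O(\log |P|)$ levels with $t \le h^P$ contribute $O(\log |P| \log^* M)$. For $t > h^P$ I would use that every block produced by $\mathit{Eblock}$ has length at least two and every run compressed by $\mathit{Epow}$ has length at least two, so each node covers at least two nodes on the level below; consequently a contiguous range of width $c$ has at most $c/2 + O(1)$ parents. Starting from width $O(\log^* M)$ at level $h^P$, the widths therefore decay geometrically and sum to $O(\log^* M)$ until the range collapses to a single node, after which $X$ is a lone path climbing the remaining $O(\log |s|)$ levels to the root. Hence the part above $h^P$ contributes $O(\log^* M + \log |s|)$ nodes, and adding the two regimes yields the claimed total $O(\log |s| + \log |P| \log^* M)$.

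I expect the main obstacle to be making the two width estimates rigorous. The delicate points are that the frontier pieces strictly below a given level really do collapse to a contiguous $O(\log^* M)$-size set of ancestors --- which rests on the monotonicity of the staircase together with the local-consistency guarantee of Lemma~\ref{lem:CoinTossing} that the boundaries of $P$ are parsed identically at every occurrence --- and that contiguous ranges contract by a constant factor per level. The latter needs a careful treatment of the run-length ($\mathit{Pow}$) productions and of runs of length one, which must be shown not to violate the ``every parent covers at least two children'' property that drives the geometric decay above $h^P$.
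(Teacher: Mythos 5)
First, a caveat: the paper does not actually prove Lemma~\ref{lem:ancestors}; it is imported from~\cite{DBLP:journals/corr/NishimotoIIBT16} with the remark ``see the references for proofs,'' so there is no in-paper proof to compare against. Your overall architecture --- describe the frontier of $\uniq{P}$ as a two-sided staircase with $O(\log^* M)$ pieces per level, bound the width of the induced subtree $X$ level by level, and split the analysis at level $h^{P}$ --- is the natural and essentially standard one for this lemma, and your treatment of the regime above $h^{P}$ (geometric contraction of a contiguous range down to a single root-path of length $O(\log |s|)$) is the right mechanism.

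The genuine gap is in the per-level width bound for levels $t \le h^{P}$. You assert that the level-$t$ ancestors of the frontier pieces lying strictly below level $t$ form ``a contiguous range of $O(\log^* M)$ nodes,'' justified by contiguity and monotonicity of the staircase. Contiguity is correct but does not bound the count: the frontier contributes $\Theta(\log^* M)$ new leaves at \emph{each} level, so without any contraction the ancestors at level $t$ of everything below could accumulate to $\Theta(t \log^* M)$, which would already break both the per-level bound and the total (you would get $O(\log^2|P|\log^* M)$). The reason the width stays at $O(\log^* M)$ is exactly the same contraction you invoke only above $h^{P}$: if $A_t$ denotes the number of level-$t$ nodes of $X$ on (say) the left side, then because $\encblock{\cdot}$ produces blocks of length at least two, the ancestors at level $t$ of a contiguous range of $k$ nodes at level $t-1$ number at most $k/2 + O(1)$, giving the recurrence $A_t \le A_{t-1}/2 + O(\log^* M)$ (the additive term being the parents of $\hat{L}_{t}^{P}L_{t}^{P}$), whose solution is $A_t = O(\log^* M)$ uniformly in $t$. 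So the halving argument must be run at every level, not just above $h^{P}$; ``monotonicity plus local consistency'' is not the operative mechanism and leaves the crux unproven. Relatedly, the issue you flag about $\encpow{\cdot}$ runs of length one is real but benign: a $\mathit{Pow}$ sublevel need not contract, but each full level contains an $\encblock{\cdot}$ sublevel, so the factor-two contraction holds once per level, which is all the recurrence needs. With the recurrence installed at every level, the rest of your proof (summing $O(\log^* M)$ over $O(\log|P|)$ levels up to $h^{P}$, then $O(\log^* M + \log|s|)$ above) goes through and yields the stated bounds.
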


We can efficiently compute $\uniq{P}$ for a substring $P$ of $T$.
\begin{lemma}[\cite{DBLP:journals/corr/NishimotoIIBT16}]\label{lem:ComputeShortCommonSequence}
Using a signature encoding $\mathcal{G}$ of size $w$, 
given a signature $e \in \mathcal{V}$ (and its corresponding node in the DAG)
and two integers $j$ and $y$, 
we can compute $\encpow{\uniq{s[j..j+y-1]}}$ in $O(\log |s| + \log y \log^* M)$ time, 
where $s = \val{e}$.
\end{lemma}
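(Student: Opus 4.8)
The plan is to compute $\encpow{\uniq{P}}$ for $P = s[j..j+y-1]$ by locating this occurrence of $P$ inside the derivation tree $\mathcal{T}$ of $e$ and reading the signatures of $\uniq{P}$ off $\mathcal{T}$ directly, guided by the explicit form of $\uniq{P}$ in Definition~\ref{def:xshrink}. The property I would lean on throughout is the one behind Lemma~\ref{lem:common_sequence2}: the internal signatures produced by $\xshrink{t}{\cdot}$ and $\xpow{t}{\cdot}$ depend only on the content of $P$, not on its context, so every signature occurring in $\uniq{P}$ is already realized as a node of $\mathcal{T}$ at this occurrence of $P$. Hence no signatures need to be created; it suffices to navigate the DAG of $\mathcal{G}$ and collect existing nodes. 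Concretely, I would first descend from the root of $\mathcal{T}$ towards the leaves at positions $j$ and $j+y-1$, obtaining the left and right boundary spines, which coincide down to their lowest common ancestor and then split --- in total $O(\log|s|)$ nodes.

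I would then reconstruct $\uniq{P}$ level by level, from level $0$ up to level $h^P$, following Definition~\ref{def:xshrink}. At each level $t$ only a short window of signatures on each side is needed: the leading $\deltaLR{L}+\deltaLR{R}+O(1)$ signatures of $\xpow{t-1}{P}$ and $\xshrink{t}{P}$ and the symmetric trailing windows. Given such a window, Lemma~\ref{lem:CoinTossing} together with its precomputed table evaluates the bit sequence $d$ in $O(1)$ time per position, and since a $1$-bit occurs within any four consecutive positions, a constant number of evaluations pins down the boundary prefixes/suffixes $L_t^P$ and $R_t^P$; combined with the run-length grouping $\encpow{\cdot}$ this also fixes the single-run pieces $\hat{L}_t^P$ and $\hat{R}_t^P$. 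I append the peeled pieces $\hat{L}_t^P L_t^P$ and $R_t^P \hat{R}_t^P$ to the output in run-length form, advance each window one level up by following the parent nodes already present in $\mathcal{T}$, and continue. The recursion halts at level $h^P$, where $|\encpow{\xshrink{h^P}{P}}| = O(\log^* M)$ is small enough to be emitted verbatim as the core of $\uniq{P}$.

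For the running time, observe that the nodes of $\mathcal{T}$ touched by this procedure are exactly those of the induced subtree $X$ whose leaves are the parents of the nodes representing $\uniq{P}$; the signatures of $\uniq{P}$ themselves are the relevant children of the leaves of $X$. By Lemma~\ref{lem:ancestors}, $X$ has $O(\log^* M)$ nodes per level, nontrivial width only on the $O(\log y)$ levels up to $h^P$, and $O(\log|s| + \log y \log^* M)$ nodes in total. Each node is handled in $O(1)$ amortized time --- pointer following in the DAG plus $O(1)$ table lookups for $d$ --- and the emitted output has size $O(\log y \log^* M)$ by Lemma~\ref{lem:common_sequence2}. Adding the $O(\log|s|)$ spine above level $h^P$ to the $O(\log y \log^* M)$ boundary work yields the claimed $O(\log|s| + \log y \log^* M)$ bound.

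I expect the crux to be the treatment of the two boundaries. Because the parsing of a position depends on its $\deltaLR{L}$ left neighbors and $\deltaLR{R}$ right neighbors, the parsing near the ends of $P$ is affected by symbols lying outside $P$, so the boundary signatures of this occurrence need not belong to the shared internal sequence; the lower bounds $|L_t^P| \geq \deltaLR{L}$ and $|R_t^P| \geq \deltaLR{R}+1$ in Definition~\ref{def:xshrink} are exactly what isolates a context-independent interior. The delicate points to verify are that the windows read from $\mathcal{T}$ are always long enough to determine $L_t^P, R_t^P, \hat{L}_t^P, \hat{R}_t^P$ unambiguously at every level, and that moving a window from level $t-1$ to level $t$ never forces us to inspect nodes outside the already-traversed part of $X$, so that the per-level cost stays $O(\log^* M)$.
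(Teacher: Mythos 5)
The paper itself gives no proof of this lemma --- it is imported verbatim from the cited reference with the remark ``see the references for proofs'' --- so there is nothing in-document to compare against line by line. Judged on its own terms, your proposal is the natural argument and its structure is sound: the consistency property behind Lemma~\ref{lem:common_sequence2} guarantees that every signature of $\uniq{P}$ is already realized at this occurrence of $P$ in the derivation tree of $e$, so the task reduces to traversing the induced subtree of Lemma~\ref{lem:ancestors}; the boundary pieces $L_t^P, R_t^P, \hat{L}_t^P, \hat{R}_t^P$ are determined by $O(\log^* M)$-length windows using the precomputed table of Lemma~\ref{lem:CoinTossing}; emitting the output in run-length form is exactly why the lemma asks for $\encpow{\uniq{P}}$ rather than $\uniq{P}$; and the accounting $O(\log|s|)$ for the spine plus $O(\log^* M)$ per level over $h^P = O(\log y)$ levels gives the claimed bound.

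The one place where your write-up would fail as literally stated is the phrase ``advance each window one level up by following the parent nodes already present in $\mathcal{T}$.'' The DAG representation stores, for a signature $X$, the set $\mathit{parents}(X)$ of \emph{all} variables whose right-hand side contains $X$ --- not the parent of a particular occurrence --- so a bottom-up step from a node occurrence to its parent occurrence is not an $O(1)$ operation. Moreover, going bottom-up a window of $c$ signatures at level $t-1$ yields only about $c/4$ signatures at level $t$ after peeling $L_t^P$, so maintaining a $\Theta(\log^* M)$-wide window would require repeatedly extending it rightward, which again needs access to ancestors you have not yet identified. The standard fix uses exactly the data you already compute: after the $O(\log|s|)$ descent you hold the spine node at every level, and the window at level $t$ is the spine node together with the next $O(\log^* M)$ nodes at that level, all of which are obtained \emph{top-down} by expanding the $2$--$4$ children (or the single run child, in $O(1)$ time) of the $O(\log^* M)$ window nodes at level $t+1$. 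Running the traversal downward from the root, recording the windows at every level, and only then assembling $\uniq{P}$ from level $0$ upward removes the need for parent-occurrence pointers and keeps every touched node inside the subtree bounded by Lemma~\ref{lem:ancestors}. With that reversal of direction the argument goes through; the rest of your reasoning, including the treatment of the two boundaries and of long runs, is correct.
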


The next lemma shows that $\mathcal{G}$ requires only \emph{compressed space}:
\begin{lemma}[\cite{18045,DBLP:journals/corr/NishimotoIIBT16}]\label{lem:upperbound_signature}
The size $w$ of the signature encoding of $T$ of length $N$ is 
$O(\min(z \log N \log^* M, N))$,
where $z$ is the number of factors in the LZ77 factorization without self-reference of $T$. 
\end{lemma}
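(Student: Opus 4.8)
The plan is to bound $w = |\mathcal{V}|$, the number of distinct signatures (equivalently, distinct nodes in the DAG), by proving the two terms of the minimum separately. The $O(N)$ bound is immediate: passing from $\pow{t-1}{T}$ to $\shrink{t}{T}$ applies $\encblock{\cdot}$, which groups the sequence into blocks of length at least $2$, so each level-$t$ sequence is at most half as long as the level-$(t-1)$ sequence, and the intervening $\encpow{\cdot}$ step only shortens sequences. Hence the sequence lengths decay geometrically and sum to $O(N)$, so the total number of nodes in the derivation tree of $S$, and a fortiori the number of distinct signatures, is $O(N)$.

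For the $O(z\log N\log^* M)$ bound I would charge each distinct signature to a pair (LZ factor, level). Fix the factorization $T = f_1\cdots f_z$ with boundaries $0 = b_0 < b_1 < \cdots < b_z = N$, and for each signature $e$ let $v_e$ be its \emph{leftmost} occurrence node in the derivation tree, lying at some level $t_e$ and spanning $T[\ell_e..r_e]$. I would assign $e$ to the factor $f_i$ containing $\ell_e$, paired with $t_e$. It then suffices to show each pair receives only $O(\log^* M)$ signatures, since there are $z$ factors and the height is $O(\log N)$ levels. The key structural fact is the partition property: at a fixed level the nodes tile $T$, so each position of $T$ is covered by exactly one level-$t$ node; in particular at most one level-$t$ node starts inside $f_i$ yet extends past its right end $b_i$, namely the unique node covering position $b_i$.

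The heart of the argument is a locality claim. Since the factorization is without self-references, a copied factor $f_i$ equals $T[s_i..s_i+|f_i|-1]$ with the whole source ending before $b_{i-1}+1$. I would then show, for every level $t$, that the level-$t$ signatures whose spans lie in the interior of $f_i$ — those at level-$t$ distance more than $\deltaLR{L}+\deltaLR{R}+O(1)$ from both ends of the level-$t$ image of $f_i$ — coincide with the corresponding level-$t$ signatures of the earlier occurrence. This is exactly the common-sequence phenomenon of Lemma~\ref{lem:common_sequence2}: each parsing bit depends only on a window of width $\deltaLR{L}+\deltaLR{R}+1$, so equal substrings share equal internal signature sequences, and this propagates up level by level. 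Consequently every interior signature already occurs strictly to the left of $\ell_e$ and thus cannot have its leftmost occurrence in $f_i$. Only the signatures within level-$t$ distance $O(\log^* M) = O(\deltaLR{L})$ of the left or the right end of $f_i$ (the boundary zones analogous to the prefixes/suffixes $\hat{L}_t^P L_t^P$ and $R_t^P\hat{R}_t^P$ in Definition~\ref{def:xshrink}) can be new, and there are $O(\log^* M)$ of them; adding the single boundary-crossing node from the partition property, and noting that fresh single-character factors contribute $O(1)$ each and number $O(z)$, each pair receives $O(\log^* M)$ signatures.

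I expect the main obstacle to be making the ``interior signatures coincide'' step fully rigorous in the presence of the run-length layer $\encpow{\cdot}$: a maximal run may be truncated differently at the ends of $f_i$ than at its source, producing distinct run-length signatures $\hat{e}^{k}\neq\hat{e}^{k'}$ near the boundaries. I would handle this precisely as in the definition of $\uniq{P}$, where the maximal-run prefixes and suffixes $\hat{L}_t^P,\hat{R}_t^P$ are peeled off before the common core is extracted; this confines all run-boundary mismatches to the same $O(\log^* M)$-width boundary zones already being charged, so the per-(factor, level) count remains $O(\log^* M)$. Summing over the $z$ factors and $O(\log N)$ levels gives $w = O(z\log N\log^* M)$, and combining with the $O(N)$ bound yields $w = O(\min(z\log N\log^* M, N))$ as claimed.
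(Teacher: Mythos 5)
Your argument is correct and is essentially the proof given in the cited references (the paper itself imports Lemma~\ref{lem:upperbound_signature} without proof): the $O(N)$ bound follows from the geometric decay of the level lengths under $\mathit{Eblock}$, and the $O(z\log N\log^* M)$ bound from charging each signature's leftmost occurrence to a (factor, level) pair and using the local-consistency/common-sequence property (Lemma~\ref{lem:common_sequence2}) to confine the chargeable signatures to $O(\log^* M)$-wide boundary zones plus one boundary-crossing node. The only cosmetic difference is that the references organize the second bound incrementally --- build $T$ by appending $f_1,\dots,f_z$ and bound the new signatures per append via the ancestors bound (Lemma~\ref{lem:ancestors}), as in Lemma~\ref{lem:up_sig_bound} --- rather than via your static leftmost-occurrence charging; both rest on the same locality fact, and your treatment of the run-length layer at the factor ends (peeling off the single maximal runs as in $\hat{L}_t^P,\hat{R}_t^P$) is the right fix.
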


The next lemma shows that the signature encoding supports 
(both forward and backward) LCE queries on a given arbitrary pair of signatures.
\begin{lemma}[\cite{DBLP:journals/corr/NishimotoIIBT16}]\label{lem:sub_operation_lemma}
	Using a signature encoding $\mathcal{G}$ for a string $T$, 
	we can support queries $\LCEQ(s_1, s_2, i, j)$ and $\LCEQ(s_1^{R}, s_2^{R}, i, j)$ 
	in $O(\log |s_1| + \log |s_2| + \log\ell \log^* M)$ time 
	for given two signatures $e_1, e_2 \in \mathcal{V}$ and 
	two integers $1 \leq i \leq |s_1|$, $1 \leq j \leq |s_2|$, 
	where $s_1 = \val{e_1}$, $s_2 = \val{e_2}$ and $\ell$ is the answer to the $\LCEQ$ query. 
\end{lemma}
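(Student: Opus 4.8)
The plan is to reduce both query types to a single longest-common-prefix computation and to exploit the common-sequence property (Lemma~\ref{lem:common_sequence2}) so that the running time is charged only against the structure covering the \emph{matched} prefix, never against the full suffixes. Since the signature encoding is built symmetrically through the left and right margins $\deltaLR{L},\deltaLR{R}$ of Definition~\ref{def:xshrink}, the backward query $\LCEQ(\rev{s_1},\rev{s_2},i,j)$ is answered by running the forward procedure on the reversed strings; so it suffices to describe $\LCEQ(s_1,s_2,i,j)=\lcp{s_1[i..|s_1|]}{s_2[j..|s_2|]}$. The engine is the equivalence: for every length $y$, $s_1[i..i+y-1]=s_2[j..j+y-1]$ \emph{if and only if} $\uniq{s_1[i..i+y-1]}=\uniq{s_2[j..j+y-1]}$, because $\uniq{\cdot}$ is a canonical encoding with $\valp{\uniq{P}}=P$ (Lemma~\ref{lem:common_sequence2}). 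Hence, writing $P=s_1[i..i+\ell-1]$ for the matched prefix, the answer $\ell$ is the largest such $y$, and $\uniq{P}$ has run-length size $|\encpow{\uniq{P}}|=O(\log\ell\log^* M)$, spread over $O(\log\ell)$ levels with only $O(\log^* M)$ boundary signatures per level (the $\hat L_t^P L_t^P$, $R_t^P \hat R_t^P$ margins of Definition~\ref{def:xshrink}).

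Computing the two common sequences for a fixed $y$ via Lemma~\ref{lem:ComputeShortCommonSequence} and comparing them would solve the decision problem, but plugging this into an exponential search on $\ell$ would repeat the $O(\log|s_1|)$ navigation cost at every step and pick up an extra $\log\ell$ factor, so it does not meet the target bound. Instead I would first locate the query positions \emph{once}: descend in the DAG of $\mathcal{G}$ from $e_1$ to the leaf at position $i$ and from $e_2$ to the leaf at position $j$, using the stored lengths $|\val{\cdot}|$ and jumping over run-length nodes through their multiplicative factors. This costs $O(\log|s_1|)$ and $O(\log|s_2|)$, the heights of the two derivation trees, and exposes the signatures hanging to the right of each root-to-leaf path.

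Then I would compare the two suffixes by a synchronized, lazy expansion of their parses that inspects signatures only where they disagree. By Lemma~\ref{lem:common_sequence2} the two occurrences of $P$ produce \emph{identical} interior signature sequences at every level, so a matched interior can be skipped wholesale—run-length encoding lets equal maximal powers be compared in constant time per run—while the accumulated matched string length is tracked. At each level the first mismatch is therefore confined to within the $O(\log^* M)$ boundary signatures of $\uniq{P}$; the algorithm descends into the first pair of differing signatures (each a block of two to four signatures, or a power) to the next lower level and repeats, climbing through the $O(\log\ell)$ levels spanned by $\uniq{P}$ until it reaches the terminal level, at which point the accumulated length equals $\ell$. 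Since matched interiors are never re-expanded, the comparison touches only $O(|\encpow{\uniq{P}}|)=O(\log\ell\log^* M)$ signatures, and the total is $O(\log|s_1|+\log|s_2|+\log\ell\log^* M)$; this matches the node count $O(\log|s|+\log|P|\log^* M)$ of Lemma~\ref{lem:ancestors} applied to both trees.

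The main obstacle is making the ``matched interiors coincide'' step rigorous when $i$ and $j$ are arbitrary and need not align with any parse boundary. This is exactly where Lemma~\ref{lem:common_sequence2} must be invoked carefully: one has to track the left margins $\hat L_t^P L_t^P$ of each suffix and argue that, once the common prefix is long enough to have absorbed the level-$t$ boundary of width $O(\log^* M)$, the signatures generated for the two occurrences are forced to be identical, so that the mismatch at every level genuinely lies inside a bounded boundary window and the descent never revisits an already-matched subtree. The remaining work is the exact length bookkeeping—converting counts of matched signatures and run-length exponents back into a precise string length—and the boundary cases where one suffix is exhausted (i.e.\ $\ell=|s_1|-i+1$ or $\ell=|s_2|-j+1$) before any mismatch is encountered.
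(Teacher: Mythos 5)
The paper itself offers no proof of this lemma---it is imported verbatim from \cite{DBLP:journals/corr/NishimotoIIBT16}---so I am judging your proposal against the argument in that reference, which your plan does broadly reconstruct: locate the two starting positions by a root-to-leaf descent (charging $O(\log|s_1|+\log|s_2|)$), then do a synchronized comparison that skips interior signatures guaranteed equal by the common-sequence property and charges all remaining work to the $O(\log^* M)$ boundary signatures per level over the $O(\log \ell)$ levels spanned by $\uniq{P}$. That is the right decomposition and the right accounting. However, there are two genuine gaps. First, the step you yourself flag as ``the main obstacle'' is not an afterthought to be tidied up---it \emph{is} the proof. The whole bound rests on showing that when two level-$t$ signatures disagree yet their expansions still share a long common prefix, this can only happen within a window of $O(\log^* M)$ signatures at each level (the left margin near positions $i,j$ and the right margin near the eventual mismatch), and that consequently the descent never re-expands a matched subtree. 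This requires invoking the locally consistent parsing guarantee of Lemma~\ref{lem:CoinTossing} (each bit $d[i]$ depends only on a window of width $\deltaLR{L}+\deltaLR{R}+1$) quantitatively, not just citing Lemma~\ref{lem:common_sequence2} for the existence of $\uniq{P}$; as written, your argument asserts the conclusion rather than deriving it, and without it the comparison could in principle descend into $\omega(\log^* M)$ disagreeing-but-partially-matching signatures at some level.

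Second, the reduction of $\LCEQ(\rev{s_1},\rev{s_2},i,j)$ to ``the forward procedure on the reversed strings'' is not available: there is no signature encoding of $\rev{T}$ to run anything on, and the parsing is genuinely asymmetric ($\deltaLR{L}=\log^* W+6$ versus $\deltaLR{R}=4$, and $\mathit{Eblock}$ and $\mathit{Sig}$ are evaluated left to right). The correct route is to observe that $\uniq{P}$ also has bounded \emph{right} boundaries $R_t^P\hat R_t^P$ with $|\encpow{\cdot}|=O(1)$ runs per level, so the same synchronized traversal can be carried out right-to-left on the \emph{same} encoding; this is a symmetric argument on the unreversed DAG, not an application of the forward algorithm to reversed input. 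Together with the length bookkeeping and the exhaustion cases you defer at the end, these are the parts that actually need to be written down before the proposal constitutes a proof.
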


\subsection{Dynamic signature encoding}

We consider a \emph{dynamic signature encoding} $\mathcal{G}$ of $T$,
which allows for efficient updates of $\mathcal{G}$ in compressed space according to the following operations:
$\mathit{INSERT}(Y, i)$ inserts a string $Y$ into $T$ at position $i$, i.e., $T \leftarrow T[..i-1] Y T[i..]$;
$\mathit{INSERT'}(j, y, i)$ inserts $T[j..j+y-1]$ into $T$ at position $i$, i.e., $T \leftarrow T[..i-1] T[j..j+y-1] T[i..]$; and
$\mathit{DELETE}(j, y)$ deletes a substring of length $y$ starting at $j$, i.e., $T \leftarrow T[..j-1]T[j+y..]$.

During updates we recompute $\shrink{t}{T}$ and $\pow{t}{T}$ for some part of new $T$
(note that the most part is unchanged thanks to the virtue of signature encodings, Lemma~\ref{lem:ancestors}).
When we need a signature for $\mathit{expr}$, 
we look up the signature assigned to $\mathit{expr}$ (i.e., compute $\mathit{Assign}^{-1}(\mathit{expr})$) and use it if such exists.
If $\mathit{Assign}^{-1}(\mathit{expr})$ is undefined
we create a new signature $e_{\mathit{new}}$, which is an integer that is currently not used as signatures,
and add $e_{\mathit{new}} \rightarrow \mathit{expr}$ to $\mathcal{D}$.
Also, updates may produce a useless signature whose parents in the DAG are all removed.
We remove such useless signatures from $\mathcal{G}$ during updates.

We can upper bound the number of signatures added to or removed from $\mathcal{G}$ after a single update operation by the following lemma.
\footnote{The property is used in~\cite{DBLP:journals/corr/NishimotoIIBT16}, but there is no corresponding lemma to state it clearly.}
\begin{lemma}\label{lem:up_sig_bound}
After $\mathit{INSERT}(Y, i)$ or $\mathit{DELETE}(j, y)$ operation,
$O(y + \log N \log^* M)$ signatures are added to or removed from $\mathcal{G}$,
where $|Y| = y$.
After $\mathit{INSERT'}(j, y, i)$ operation,
$O(\log N \log^* M)$ signatures are added to or removed from $\mathcal{G}$.
\end{lemma}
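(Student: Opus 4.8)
The plan is to bound, for each update operation, the signatures created and destroyed by decomposing the operation into a constant number of elementary split and concatenation operations on the derivation tree, and to show that each elementary operation only disturbs the signatures lying on the ``spine'' near the affected positions, whose count is controlled by Lemma~\ref{lem:ancestors}, together with the bulk of the inserted (resp. deleted) material, whose count is controlled by the trivial $N$-term of Lemma~\ref{lem:upperbound_signature}.

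First I would treat $\mathit{INSERT}(Y,i)$, realized as: split the derivation tree of $T$ at position $i$ into the encodings of $A = T[..i-1]$ and $B = T[i..]$, build the signature encoding of $Y$ from scratch, and perform the two concatenations $A\cdot Y$ and $(A\cdot Y)\cdot B$. For the split and each concatenation, the key observation, which follows from the local consistency of Lemma~\ref{lem:CoinTossing}, is that away from the cut/join point the parsing at every level is determined by a context of length $\deltaLR{L}+\deltaLR{R} = O(\log^* M)$ and is therefore unchanged; hence only $O(\log^* M)$ signatures per level can be newly created or become useless, and summing over the $O(\log N)$ levels gives $O(\log N \log^* M)$ affected signatures, exactly the induced-subtree bound of Lemma~\ref{lem:ancestors}. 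For the construction of $Y$'s encoding I would invoke Lemma~\ref{lem:upperbound_signature} specialized to $Y$: the signature encoding of any length-$y$ string has $O(y)$ signatures in total, so at most $O(y)$ new signatures are introduced (those already present in $\mathcal{G}$ are reused via $\mathit{Assgn}^{-1}$). Adding the two contributions yields the claimed $O(y + \log N \log^* M)$ bound.

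Next I would handle $\mathit{DELETE}(j,y)$, realized as two splits (at $j$ and $j+y$) and one concatenation of $T[..j-1]$ with $T[j+y..]$. The concatenation and the recomputation along the new spine again cost $O(\log N \log^* M)$ signatures by the same argument. The signatures that become useless and are removed are exactly those whose every occurrence lay inside the deleted window, together with the old spine signatures; since the derivation covering $T[j..j+y-1]$ contains $O(y)$ distinct signatures by the $N$-term of Lemma~\ref{lem:upperbound_signature}, and the spine contributes $O(\log N \log^* M)$, the number removed is $O(y + \log N \log^* M)$.

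Finally, the sharper bound for $\mathit{INSERT'}(j,y,i)$ is where the common-sequence property does the real work, and this is the step I expect to be the main obstacle to state cleanly. The point is that $Y = T[j..j+y-1]$ already occurs in $T$, so by Lemma~\ref{lem:common_sequence2} its common sequence $\uniq{Y}$---and hence all of the $O(\log y \log^* M)$ signatures appearing in it---are already present in $\mathcal{G}$, and can be located in $O(\log N + \log y \log^* M)$ time by Lemma~\ref{lem:ComputeShortCommonSequence} without creating anything new. Consequently the only fresh signatures are those needed to stitch $\uniq{Y}$ into the new derivation tree at position $i$, namely the spine signatures of the two boundaries, which is $O(\log N \log^* M)$ by Lemma~\ref{lem:ancestors}; the bulk $O(y)$ term is entirely avoided. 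The delicate part is verifying that the boundary pieces $\hat{L}_{t}^{Y} L_{t}^{Y}$ and $R_{t}^{Y} \hat{R}_{t}^{Y}$ flanking each $\xshrink{t}{Y}/\xpow{t}{Y}$ level of $\uniq{Y}$ indeed coincide with already-existing signatures of the occurrence at position $j$, so that truly no new signature beyond the $O(\log N \log^* M)$ spine is forced; this is precisely the content of the common-sequence construction of Definition~\ref{def:xshrink}, which makes $\uniq{Y}$ depend only on $Y$ and not on its surrounding context.
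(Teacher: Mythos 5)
Your proposal is correct and follows essentially the same route as the paper's proof: both rest on the fact that the encoding of the new text is built over the concatenation of the common sequences $\uniq{T[..i-1]}\,\uniq{Y}\,\uniq{T[i..]}$ (Lemma~\ref{lem:common_sequence2}), bound the added/removed spine signatures by Lemma~\ref{lem:ancestors}, charge the extra $O(y)$ term to creating (resp.\ destroying) the encoding of $Y$ itself, and observe that for $\mathit{INSERT'}$ the sequence $\uniq{Y}$ already exists in $\mathcal{G}$ so the $O(y)$ term vanishes. Your decomposition into explicit split/concatenate steps and the per-level local-consistency accounting is just a more operational phrasing of the same argument.
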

\begin{proof}
Consider $\mathit{INSERT'}(j, y, i)$ operation.
Let $T' = T[..i-1] T[j..j+y-1] T[i..]$ be the new text.
Note that by Lemma~\ref{lem:common_sequence2} the signature encoding of $T'$ is created over $\uniq{T[..i-1]} \uniq{T[j..j+y-1]} \uniq{T[i..]}$,
and hence, $O(\log N \log^* M)$ signatures can be added by Lemma~\ref{lem:ancestors}.
Also, $O(\log N \log^* M)$ signatures, which were created over $\uniq{T[..i-1]} \uniq{T[i..]}$, may be removed.

For $\mathit{INSERT}(Y, i)$ operation,
we additionally think about the possibility that $O(y)$ signatures are added to create $\uniq{Y}$.
Similarly, for $\mathit{DELETE}(j, y)$ operation, $O(y)$ signatures, which are used in and under $\uniq{T[j..j+y-1]}$, can be removed.
\qed
\end{proof}

In~\cite{DBLP:journals/corr/NishimotoIIBT16}, it was shown how to augment the DAG representation of $\mathcal{G}$ 
to add/remove an assignment to/from $\mathcal{G}$ in $O(f_{\mathcal{A}})$ time, where 
$f_{\mathcal{A}} = O\left(\min \left\{ \frac{\log \log M \log \log w}{\log \log \log M}, \sqrt{\frac{\log w}{\log\log w}} \right\} \right)$
is the time complexity of Beame and Fich's data structure~\cite{DBLP:journals/jcss/BeameF02}
to support predecessor/successor queries on a set of $w$ integers from an $M$-element universe.\footnote{
	The data structure is, for example, used to compute $\mathit{Assgn}^{-1}(\cdot)$.
	Alstrup et al.~\cite{LongAlstrup} used hashing for this purpose.
	However,
	since we are interested in the worst case time complexities, we use
	the data structure~\cite{DBLP:journals/jcss/BeameF02} in place of hashing.
}
Note that there is a small difference in our DAG representation from the one in~\cite{DBLP:journals/corr/NishimotoIIBT16};
our DAG has a doubly-linked list representing the parents of a node.
We can check if a signature is useless or not by checking if the list is empty or not,
and the lists can be maintained in constant time after adding/removing an assignment.
Hence, the next lemma still holds for our DAG representation.
\begin{lemma}[Dynamic signature encoding~\cite{DBLP:journals/corr/NishimotoIIBT16}]\label{lem:theorem2}
	After processing $\mathcal{G}$ in $O(w f_{\mathcal{A}})$ time,
	we can insert/delete any (sub)string $Y$ of length $y$ into/from an arbitrary position of $T$ in $O((y+ \log N\log^* M) f_{\mathcal{A}})$ time.
	Moreover, if $Y$ is given as a substring of $T$,
	we can support insertion in $O(f_{\mathcal{A}} \log N \log^* M)$ time.
\end{lemma}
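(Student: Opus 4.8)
The plan is to reduce each edit of $T$ to a small number of elementary insertions and deletions of assignments in the DAG of $\mathcal{G}$, each costing $O(f_{\mathcal{A}})$ time, and to bound that number by $O(y + \log N \log^* M)$ (and by $O(\log N \log^* M)$ for $\mathit{INSERT'}$). First I would spend the $O(w f_{\mathcal{A}})$ preprocessing budget building the Beame--Fich predecessor/successor structure~\cite{DBLP:journals/jcss/BeameF02} over the $w$ integers currently used as signatures; this is what evaluates $\assign^{-1}(\cdot)$ in $O(f_{\mathcal{A}})$ time and hence turns each ``reuse the signature of $\mathit{expr}$ if it exists, otherwise create a fresh one'' step into an $O(f_{\mathcal{A}})$ operation. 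I would also precompute once the $o(\log M)$-size table of Lemma~\ref{lem:CoinTossing}, so that the locally consistent parsing of any level can be run in time linear in its length.

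I would implement $\mathit{INSERT}(Y, i)$ by the standard split-and-concatenate scheme: split the encoding of $T$ at position $i$ into encodings of $T[..i-1]$ and $T[i..]$, build the encoding of $Y$ from its $y$ characters, and then concatenate the three. Each split and each concatenate is carried out by walking the DAG along the cut (or join) boundary and recomputing $\shrink{t}{\cdot}$ and $\pow{t}{\cdot}$ level by level: at each level I reparse only a constant-width window around the boundary via Lemma~\ref{lem:CoinTossing}, replace the affected blocks, and obtain each resulting signature through $\assign^{-1}$, adding a new assignment when the lookup fails. The reason this stays cheap is the common-sequence property (Lemma~\ref{lem:common_sequence2}) together with the ancestor bound (Lemma~\ref{lem:ancestors}): since the $i$-th parsing bit depends only on positions in $[i-\deltaLR{L}..i+\deltaLR{R}]$, every signature strictly interior to $\uniq{T[..i-1]}$, $\uniq{Y}$, and $\uniq{T[i..]}$ coincides with the one a from-scratch construction would produce, so only the $O(\log^* M)$ fringe signatures per level, across $O(\log N)$ levels, are ever touched. $\mathit{DELETE}(j, y)$ would be handled symmetrically: split off $T[j..j+y-1]$, drop its root, and reconcatenate the two surviving parts.

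For the accounting, the number of touched signatures is exactly the quantity bounded in Lemma~\ref{lem:up_sig_bound}: the two boundaries contribute $O(\log N \log^* M)$ signatures, while building the fresh encoding of $Y$ contributes $O(y)$ (a geometric sum over the levels, since each level shrinks by a constant factor). Multiplying by the $O(f_{\mathcal{A}})$ cost of each DAG add/remove---including the $O(1)$ maintenance of the parent doubly-linked lists, whose emptiness is precisely how I would detect and delete the useless signatures---yields the claimed $O((y + \log N \log^* M) f_{\mathcal{A}})$ bound. For $\mathit{INSERT'}(j, y, i)$, since $Y = T[j..j+y-1]$ is already a substring of $T$, Lemma~\ref{lem:common_sequence2} guarantees that all interior signatures of $\uniq{Y}$ already live in the DAG; I would therefore only materialize $\encpow{\uniq{Y}}$, which Lemma~\ref{lem:ComputeShortCommonSequence} returns in $O(\log N + \log y \log^* M)$ time, and then add the $O(\log N \log^* M)$ fringe signatures needed to assemble $\id{Y}$ and splice it into $T$, avoiding the $O(y)$ term and giving $O(f_{\mathcal{A}} \log N \log^* M)$.

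The hard part will not be the arithmetic but justifying the locality claim rigorously: one must show that recomputing only a $\Theta(\deltaLR{L} + \deltaLR{R})$-wide fringe at each level produces a \emph{valid} signature encoding of the edited text---equivalently, that the interior signatures inherited from the old DAG are exactly those of the canonical encoding of the new text. This is where the common-sequence machinery of Lemma~\ref{lem:common_sequence2}, and the careful boundary definitions $L_t^P, R_t^P, \hat{L}_t^P, \hat{R}_t^P$, are essential; the rest is the bookkeeping of descending and re-ascending the DAG along the edit boundary while keeping the parent lists and the predecessor structure mutually consistent.
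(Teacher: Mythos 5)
Your reconstruction is correct and matches the approach the paper relies on: the paper itself imports this lemma from~\cite{DBLP:journals/corr/NishimotoIIBT16} without reproving it, but the surrounding text supplies exactly the ingredients you use --- the Beame--Fich structure for $O(f_{\mathcal{A}})$-time $\mathit{Assgn}^{-1}$ lookups, Lemma~\ref{lem:up_sig_bound} (itself proved via Lemmas~\ref{lem:common_sequence2} and~\ref{lem:ancestors}) to bound the number of added/removed signatures, the parent doubly-linked lists to detect useless signatures in $O(1)$ time, and the already-present common sequence of a substring to avoid the $O(y)$ term for $\mathit{INSERT'}$. Multiplying the signature count by $O(f_{\mathcal{A}})$ per DAG update is precisely the intended accounting, so no further comparison is needed.
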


\section{Dynamic Compressed Index}\label{sec:static_section}

In this section, we present our dynamic compressed index based on signature encoding.
As already mentioned in the introduction, 
our strategy for pattern matching is different from that of Alstrup et al.~\cite{LongAlstrup}.
It is rather similar to the one taken in the static index for SLPs of Claude and Navarro~\cite{claudear:_self_index_gramm_based_compr}.
Besides applying their idea to RLSLPs, we show how to speed up pattern matching by utilizing the properties of signature encodings.
\vspace*{0.5pc}
\noindent \textbf{Index for SLPs.}
Here we review how the index in~\cite{claudear:_self_index_gramm_based_compr}
for SLP $\mathcal{S}$ generating a string $T$ computes $\Occ(P,T)$ for a given string $P$.
The key observation is that, any occurrence of $P$ in $T$
can be uniquely associated with the lowest node that covers the occurrence of $P$ in the derivation tree.
As the derivation tree is binary, if $|P| > 1$, then the node is labeled with some variable $X \in \mathcal{V}$ such that
$P_1$ is a suffix of $X.{\rm left}$ and $P_2$ is a prefix of $X.{\rm right}$, where $P = P_1P_2$ with $1 \leq |P_1| < |P|$.
Here we call the pair $(X, |X.{\rm left}| - |P_1| + 1)$ a \emph{primary occurrence} of $P$, and 
let $\mathit{pOcc}_{\mathcal{S}}(P, j)$ denote the set of such primary occurrences with $|P_1| = j$.
The set of all primary occurrences is denoted by $\mathit{pOcc}_{\mathcal{S}}(P) = \bigcup_{1 \leq j < |P|} \mathit{pOcc}_{\mathcal{S}}(P, j)$.
Then, we can compute $\Occ(P, T)$ by first computing primary occurrences and
enumerating the occurrences of $X$ in the derivation tree.

The set $\mathit{Occ}(P,T)$ of occurrences of $P$ in $T$ is represented by $\mathit{pOcc}_{\mathcal{S}}(P)$ as follows:
$\mathit{Occ}(P,T) = \{ j+k-1 \mid (X, j) \in \mathit{pOcc}_{\mathcal{S}}(P), k \in \mathit{vOcc}(X, S)\}$ if $|P| > 1$;
$\mathit{Occ}(P,T) = \mathit{vOcc}(X ,S) ((X \rightarrow P) \in \mathcal{D})$ if $|P| = 1$. 
See also Example~\ref{ex:primary_occurrence}. 

Hence the task is to compute $\mathit{pOcc}_{\mathcal{S}}(P)$ and $\mathit{vOcc}(X,S)$ efficiently.
Note that $\mathit{vOcc}(X,S)$ can be computed in $O(|\mathit{vOcc}(X,S)| h)$ time
by traversing the DAG in a reversed direction from $X$ to the source,
where $h$ is the height of the derivation tree of $S$.
Hence, in what follows, we explain how to compute $\mathit{pOcc}_{\mathcal{S}}(P)$ for a string $P$ with $|P| > 1$.
We consider the following problem:
\begin{problem}[Two-Dimensional Orthogonal Range Reporting Problem]\label{problem:2D}
Let $\mathcal{X}$ and $\mathcal{Y}$ denote subsets of two ordered sets,
and let $\mathcal{R} \subseteq \mathcal{X} \times \mathcal{Y}$ be a set of points on the two-dimensional plane,
where $|\mathcal{X}|, |\mathcal{Y}| \in O(|\mathcal{R}|)$.
A data structure for this problem supports a query $\mathit{report}_{\mathcal{R}}(x_1,x_2,y_1,y_2)$;
given a rectangle $(x_1,x_2,y_1,y_2)$ with $x_1, x_2 \in \mathcal{X}$ and $y_1, y_2 \in \mathcal{Y}$, 
returns $\{ (x,y) \in \mathcal{R} \mid x_1 \leq x \leq x_2, y_1 \leq y \leq y_2 \}$.
\end{problem}

Data structures for Problem~\ref{problem:2D} are widely studied in computational geometry.
There is even a dynamic variant, which we finally use for our dynamic index.
Until then, we just use any data structure that occupies $O(|\mathcal{R}|)$ space and
supports queries in $O(\hat{q}_{|\mathcal{R}|} + q_{|\mathcal{R}|} \mathit{qocc})$ time with $\hat{q}_{|\mathcal{R}|} = O(\log |\mathcal{R}|)$,
where $\mathit{qocc}$ is the number of points to report.

Now, given an SLP $\mathcal{S}$, we consider a two-dimensional plane defined by 
$\mathcal{X} = \{ X.{\rm left}^{R} \mid X \in \mathcal{V} \}$ and $\mathcal{Y}  = \{ X.{\rm right} \mid X \in \mathcal{V} \}$,
where elements in $\mathcal{X}$ and $\mathcal{Y}$ are sorted by lexicographic order.
Then consider a set of points $\mathcal{R} = \{ (X.{\rm left}^{R}, X.{\rm right}) \mid X \in \mathcal{V} \}$.
For a string $P$ and an integer $1 \leq j < |P|$,
let $y_1^{(P,j)}$ (resp. $y_2^{(P,j)}$) denote the lexicographically smallest (resp. largest) element 
in $\mathcal{Y}$ that has $P[j+1..]$ as a prefix.
If there is no such element, it just returns NIL and we can immediately know that $\mathit{pOcc}_{\mathcal{S}}(P,j) = \emptyset$.
We define $x_1^{(P,j)}$ and $x_2^{(P,j)}$ in a similar way over $\mathcal{X}$.
Then, $\mathit{pOcc}_{\mathcal{S}}(P, j)$ can be computed by a query $\mathit{report}_{\mathcal{R}}(x_1^{(P,j)}, x_2^{(P,j)}, y_1^{(P,j)}, y_2^{(P,j)})$ (see also Example~\ref{ex:SLPGrid}).

Using this idea, we can get the next result:
\begin{lemma}\label{lem:LinearSpaceIndexOfSLP}
For an SLP $\mathcal{S}$ of size $n$,
there exists a data structure of size $O(n)$ that computes, given a string $P$,
$\mathit{pOcc}_{\mathcal{S}}(P)$ in $O(|P| (h + |P|) \log n + q_{n} |\mathit{pOcc}_{\mathcal{S}}(P)|)$ time.
\end{lemma}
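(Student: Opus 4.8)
The plan is to assemble three $O(n)$-space components and show the query-time bound follows by separately accounting for the cost of locating lexicographic ranges and the cost of the range-reporting queries. The components are: (i) the DAG of $\mathcal{S}$ together with the stored lengths $|\val{X}|$, which supports extracting any length-$k$ prefix of $X.\mathrm{right}=\val{X_r}$, or any length-$k$ suffix of $X.\mathrm{left}=\val{X_\ell}$ (read reversed), in $O(k+h)$ time by descending to the appropriate boundary leaf and traversing the induced subtree; (ii) the two arrays storing $\mathcal{X}=\{X.\mathrm{left}^{R}\}$ and $\mathcal{Y}=\{X.\mathrm{right}\}$ in lexicographic order, kept as permutations of $\mathcal{V}$; and (iii) the data structure of Problem~\ref{problem:2D} built on $\mathcal{R}$, which has $|\mathcal{R}|=n$ and hence occupies $O(n)$ space and answers each $\mathit{report}_{\mathcal{R}}$ query in $O(\log n + q_n\cdot\mathit{qocc})$ time. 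All three are $O(n)$ space, giving the claimed size.

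To compute $\mathit{pOcc}_{\mathcal{S}}(P)$ I would iterate over the $|P|-1$ split positions $j$. For each $j$, I locate the range $[x_1^{(P,j)},x_2^{(P,j)}]$ in $\mathcal{X}$ of reversed left strings having $\rev{P[..j]}$ as a prefix (equivalently, left strings having $P[..j]$ as a suffix) by binary searching $\mathcal{X}$, and symmetrically locate $[y_1^{(P,j)},y_2^{(P,j)}]$ in $\mathcal{Y}$ for the prefix $P[j+1..]$; if either search returns NIL we record $\mathit{pOcc}_{\mathcal{S}}(P,j)=\emptyset$ and skip. As established just before the statement, feeding the four boundaries into $\mathit{report}_{\mathcal{R}}$ returns exactly $\mathit{pOcc}_{\mathcal{S}}(P,j)$, from which each primary occurrence $(X,|X.\mathrm{left}|-j+1)$ is formed in $O(1)$ per reported point using the stored $|\val{X_\ell}|$; the union over all $j$ is $\mathit{pOcc}_{\mathcal{S}}(P)$.

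The crux is the time analysis of a single binary search. Each comparison pits the query string, of length at most $|P|-j+1\le |P|$, against a candidate $X.\mathrm{right}$ or $X.\mathrm{left}^{R}$; by component (i) it suffices to extract at most $|P|$ characters of that string, costing $O(|P|+h)$, and the comparison halts at the first mismatch. A binary search therefore performs $O(\log n)$ comparisons at $O(|P|+h)$ each, so locating all four boundaries for a single $j$ costs $O((|P|+h)\log n)$, and over all $j$ this totals $O(|P|(|P|+h)\log n)$. The range-reporting queries contribute $\sum_j O(\log n + q_n|\mathit{pOcc}_{\mathcal{S}}(P,j)|)=O(|P|\log n + q_n|\mathit{pOcc}_{\mathcal{S}}(P)|)$. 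Summing and absorbing the lower-order $|P|\log n$ term yields $O(|P|(h+|P|)\log n + q_n|\mathit{pOcc}_{\mathcal{S}}(P)|)$, as claimed.

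The main obstacle I anticipate is pinning down the $O(k+h)$ bound for extracting a length-$k$ prefix or suffix of a variable's right or left string from the DAG, since it is precisely this per-comparison cost that prevents a naive $O(|P|\cdot h\cdot\log n)$ blow-up at each split and makes the stated running time tight. Everything else is bookkeeping resting on the already-given reduction of $\mathit{pOcc}_{\mathcal{S}}(P,j)$ to a rectangle query and on the black-box guarantees of the data structure for Problem~\ref{problem:2D}.
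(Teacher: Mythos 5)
Your proposal is correct and follows essentially the same route as the paper's proof: binary search over the lexicographically sorted $\mathcal{X}$ and $\mathcal{Y}$ with each comparison costing $O(h+|P|)$ by expanding only the first $O(|P|)$ characters of the candidate variable's left/right string, followed by one range-reporting query per split position. The only difference is that you spell out the $O(k+h)$ prefix-extraction subroutine and the final assembly of primary occurrences explicitly, which the paper leaves implicit.
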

\begin{proof}
	For every $1 \leq j < |P|$, we compute $\mathit{pOcc}_{\mathcal{S}}(P, j)$ by 
	$\mathit{report}_{\mathcal{R}}(x_1^{(P,j)}, x_2^{(P,j)}, y_1^{(P,j)}, y_2^{(P,j)})$.
	We can compute $y_1^{(P,j)}$ and $y_2^{(P,j)}$ in $O((h + |P|) \log n)$ time by binary search on $\mathcal{Y}$,
	where each comparison takes $O(h + |P|)$ time for expanding the first $O(|P|)$ characters of variables subjected to comparison.
	In a similar way, $x_1^{(P,j)}$ and $x_2^{(P,j)}$ can be computed in $O((h + |P|) \log n)$ time.
	Thus, the total time complexity is 
	$O(|P| ((h + |P|) \log n + \hat{q}_{n}) + q_{n} |\mathit{pOcc}_{\mathcal{S}}(P)|) = O(|P| (h + |P|) \log n + q_{n} |\mathit{pOcc}_{\mathcal{S}}(P)|)$.
	\qed
\end{proof}

\vspace*{0.5pc}
\noindent \textbf{Index for RLSLPs.}
We extend the idea for the SLP index described above to RLSLPs.
The difference from SLPs is that we have to deal with occurrences of $P$
that are covered by a node labeled with $X \rightarrow \hat{X}^{k}$ but not covered by any single child of the node in the derivation tree.
In such a case, there must exist $P = P_1P_2$ with $1 \leq |P_1| < |P|$ such that
$P_1$ is a suffix of $X.{\rm left} = \valp{\hat{X}}$ and $P_2$ is a prefix of $X.{\rm right} = \valp{\hat{X}^{k-1}}$.
Let $j = |\val{\hat{X}}| - |P_1| + 1$ be a position in $\valp{\hat{X}^{d}}$ where $P$ occurs,
then $P$ also occurs at $j + c |\val{\hat{X}}|$ in $\valp{\hat{X}^{k}}$ for every positive integer
$c$ with $j + c |\val{\hat{X}}| + |P| - 1 \leq |\valp{\hat{X}^{k}}|$.
Using this observation, 
the index for SLPs can be modified for RLSLPs to achieve the same bounds as in Lemma~\ref{lem:LinearSpaceIndexOfSLP}.
\vspace*{0.5pc}
\noindent \textbf{Index for signature encodings.}
Since signature encodings are RLSLPs,
we can compute $\mathit{Occ}(P,T)$ by querying
$\mathit{report}_{\mathcal{R}}(x_1^{(P,j)}, x_2^{(P,j)},y_1^{(P,j)}, y_2^{(P,j)})$
for ``every'' $1 \leq j < |P|$.
However, the properties of signature encodings allow us to speed up pattern matching
as summarized in the following two ideas:
(1) We can efficiently compute $x_1^{(P,j)}, x_2^{(P,j)}, y_1^{(P,j)}$ and $y_2^{(P,j)}$
using LCE queries in compressed space (Lemma~\ref{lem:ComputePatternRange}).
(2) We can reduce the number of $\mathit{report}_{\mathcal{R}}$ queries from 
$O(|P|)$ to $O(\log |P| \log^* M)$ by using the property of the common sequence of $P$ (Lemma~\ref{lem:pattern_occurrence_lemma1}).
\begin{lemma}\label{lem:ComputePatternRange}
Assume that we have the signature encoding $\mathcal{G}$ of size $w$ for a string $T$ of length $N$, 
$\mathcal{X}$ and $\mathcal{Y}$ of $\mathcal{G}$.
Given a signature $\id{P} \in \mathcal{V}$ for a string $P$ and an integer $j$, 
we can compute $x_1^{(P,j)}, x_2^{(P,j)}, y_1^{(P,j)}$ and $y_2^{(P,j)}$ in $O(\log w (\log N + \log |P| \log^* M))$ time.
\end{lemma}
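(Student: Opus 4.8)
The plan is to locate each of the four boundaries by binary search over the lexicographically sorted sets $\mathcal{X}$ and $\mathcal{Y}$, replacing the naive character-by-character string comparison used in Lemma~\ref{lem:LinearSpaceIndexOfSLP} by a single LCE query from Lemma~\ref{lem:sub_operation_lemma}. Since $|\mathcal{Y}| = O(w)$, each binary search uses $O(\log w)$ comparisons, so it suffices to show that one lexicographic comparison can be carried out in $O(\log N + \log |P| \log^* M)$ time.

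Consider first $y_1^{(P,j)}$ and $y_2^{(P,j)}$, the smallest and largest elements of $\mathcal{Y}$ having $P[j+1..]$ as a prefix. A single comparison step compares the query string $P[j+1..]$ with a candidate $Y = X.{\rm right}$. I would compute $\ell = \LCEQ(\id{P}, e_Y, j+1, 1)$, where $e_Y$ is a signature whose value is $Y$; this is the length of the longest common prefix of $P[j+1..]$ and $Y$. Reading the single characters $P[j+1+\ell]$ and $Y[\ell+1]$ (again through the signature encoding, each in $O(\log N)$ time by descending the DAG) then determines whether $P[j+1..] < Y$, $P[j+1..] > Y$, or $P[j+1..]$ is a prefix of $Y$ (the case $\ell = |P[j+1..]|$). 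Because $|P[j+1..]| \le |P| \le N$, $|Y| \le N$, and $\ell \le |P|$, Lemma~\ref{lem:sub_operation_lemma} bounds the LCE query, and hence the whole comparison, by $O(\log |P| + \log N + \log |P| \log^* M) = O(\log N + \log |P| \log^* M)$. A standard binary search then returns the contiguous block of $\mathcal{Y}$ whose elements have $P[j+1..]$ as a prefix (returning NIL if the block is empty), yielding $y_1^{(P,j)}$ and $y_2^{(P,j)}$ in the claimed time.

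The computation of $x_1^{(P,j)}$ and $x_2^{(P,j)}$ is symmetric but over reversed strings: the elements of $\mathcal{X}$ are the reversed left strings $X.{\rm left}^{R}$, and we seek those having $P[1..j]^{R}$ as a prefix, i.e., those whose left string has $P[1..j]$ as a suffix. Here I would invoke the backward query $\LCEQ(s_1^{R}, s_2^{R}, \cdot, \cdot)$ of Lemma~\ref{lem:sub_operation_lemma}, with $s_1 = P$ read backward from position $j$ and $s_2 = X.{\rm left}$ read backward from its end; the same analysis gives the same per-comparison bound, and summing the four binary searches preserves the overall complexity $O(\log w (\log N + \log |P| \log^* M))$.

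The main obstacle I anticipate is the bookkeeping for coordinates arising from run-length rules: for $X \to \hat{X}^{k}$ the right string $X.{\rm right} = \val{\hat{X}}^{k-1}$ is a power, not the value of an existing single signature, so the LCE query cannot be applied to it verbatim. I would resolve this during the preprocessing that builds $\mathcal{X}$ and $\mathcal{Y}$ by associating to the value of each such right string (and each reversed left string) a signature of its own—at most $O(w)$ auxiliary signatures in total, each of value at most $N$—so that every element of $\mathcal{X}$ and $\mathcal{Y}$ is addressable by a signature and the LCE bounds above apply uniformly. The only remaining care is to confirm that extracting the mismatching character at depth $\ell$ and distinguishing the prefix case from the strict-inequality cases stay within the LCE budget, which they do since a single character is recovered in $O(\log N)$ time.
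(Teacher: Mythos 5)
Your proposal is correct and follows essentially the same route as the paper, which likewise performs binary searches over $\mathcal{X}$ and $\mathcal{Y}$ with each lexicographic comparison resolved by an $\LCEQ$ query of Lemma~\ref{lem:sub_operation_lemma} in $O(\log N + \log |P| \log^* M)$ time. Your extra care about run-length rules is a reasonable detail the paper glosses over, though it can be handled without auxiliary signatures by querying $\LCEQ$ inside $\val{X}$ with starting offset $|\val{\hat{X}}|+1$.
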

\begin{proof}
	By Lemma~\ref{lem:sub_operation_lemma}
	we can compute $x_1^{(P,j)}$ and $x_2^{(P,j)}$ 
	on $\mathcal{X}$ by binary search in $O(\log w (\log N + \log |P| \log^* M))$ time.
	Similarly, we can compute $y_1^{(P,j)}$ and $y_2^{(P,j)}$ in the same time. 
	\qed
\end{proof}

\begin{lemma}\label{lem:pattern_occurrence_lemma1}
  Let $P$ be a string with $|P| > 1$.
  If $|\pow{0}{P}| = 1$, then $\mathit{pOcc}_{\mathcal{G}}(P) = \mathit{pOcc}_{\mathcal{G}}(P, 1)$.
  If $|\pow{0}{P}| > 1$, then $\mathit{pOcc}_{\mathcal{G}}(P) = \bigcup_{j \in \mathcal{P}} \mathit{pOcc}_{\mathcal{G}}(P, j)$, where
  $\mathcal{P} = \{ |\valp{u[1..i]}| \mid 1 \leq i < |u|, u[i] \neq u[i+1] \}$ with $u = \uniq{P}$.
\end{lemma}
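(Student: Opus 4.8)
The lemma says that to find all primary occurrences of $P$ (with $|P|>1$), we do not need to try every split point $1 \leq j < |P|$; it suffices to try only the "boundary" split points dictated by the common sequence $u = \uniq{P}$. The plan is to show that every primary occurrence of $P$ is witnessed at a split point lying on a block boundary of the top-level parsing of $P$, which is exactly captured by the set $\mathcal{P}$.

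**The approach.** Let me recall what a primary occurrence is: a pair $(e, |e.{\rm left}| - |P_1| + 1)$ where $e \to e_\ell e_r$ (or a run-length rule) is a signature, $P = P_1 P_2$ with $1 \leq |P_1| < |P|$, $P_1$ is a suffix of $e.{\rm left}$, and $P_2$ is a prefix of $e.{\rm right}$. First I would dispose of the easy case $|\pow{0}{P}| = 1$: here $P$ consists of a single maximal run of one character, so by Lemma~\ref{lem:common_sequence2} the only way $P$ straddles a node boundary is between two copies of that character, forcing $|P_1| = 1$; hence $\mathit{pOcc}_{\mathcal{G}}(P) = \mathit{pOcc}_{\mathcal{G}}(P,1)$. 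For the main case $|\pow{0}{P}| > 1$, I would argue that if $(e,\cdot) \in \mathit{pOcc}_{\mathcal{G}}(P,j)$ is a primary occurrence with split $P = P_1 P_2$, then the boundary between $P_1$ and $P_2$ inside the derivation tree of $e$ must fall at a point where two sibling subtrees of $e$'s children meet. The crucial link is the common sequence: by the construction of $\uniq{P}$ and the local consistency of Lemma~\ref{lem:common_sequence2}, the sequence $u = \uniq{P}$ is a subsequence of signatures common to every occurrence of $P$, and every node boundary at the level where $P_1$ meets $P_2$ corresponds to a position $|\valp{u[1..i]}|$ for some $i$ with $u[i] \neq u[i+1]$.

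**The key step.** The heart of the matter is showing that a split point $j$ admits a primary occurrence only if $j \in \mathcal{P}$. I would reason as follows. Fix a primary occurrence associated with signature $e$ and split $P = P_1 P_2$. The node labeled $e$ is the lowest node covering this occurrence of $P$, so the boundary between $P_1$ and $P_2$ lies between the two children $e_\ell, e_r$ of $e$ (or between two consecutive copies in a run-length rule). Now consider the occurrence of $P$ inside $\val{e}$. Since all occurrences of $P$ share the internal signatures of $\uniq{P}$ (the common-sequence property), and since $\uniq{P}$ reconstructs $P$ via $\valp{\uniq{P}} = P$, the positions $|\valp{u[1..i]}|$ for $u[i] \neq u[i+1]$ are precisely the boundaries between distinct consecutive signatures in the top-level common sequence. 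I would show that the boundary between $e_\ell$ and $e_r$ must coincide with one of these distinct-signature boundaries: if the two signatures adjacent to the split were equal, they would have been merged into a single run at a lower level (by $\mathit{Epow}$) or blocked together (by $\mathit{Eblock}$), contradicting that $e$ is the \emph{lowest} covering node. Hence $j = |P_1| \in \mathcal{P}$.

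**Main obstacle.** The delicate part will be making precise the claim that the split between $e_\ell$ and $e_r$ necessarily aligns with a common-sequence boundary $|\valp{u[1..i]}|$ with $u[i] \neq u[i+1]$, rather than at an arbitrary internal position. This requires carefully invoking the local-consistency guarantees: I must argue that the signature sequence produced around the $P_1/P_2$ boundary is uniquely determined by $P$ (independent of the surrounding context in $T$), which is exactly what the $\mathit{XShrink}/\mathit{XPow}$ construction in Definition~\ref{def:xshrink} provides, and that the "lowest covering node" condition forces the adjacent signatures to differ. The condition $u[i] \neq u[i+1]$ is what encodes the requirement that the boundary is a genuine inter-signature boundary and not a position interior to a run-length-encoded block; verifying this correspondence between derivation-tree boundaries and the index set $\mathcal{P}$ is the technical crux of the proof.
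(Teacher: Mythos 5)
Your overall architecture matches the paper's: dispose of the unary case, use the common-sequence property to restrict candidate split points to boundaries between consecutive elements of $u = \uniq{P}$, and then exclude boundaries between \emph{equal} adjacent signatures via the run-length structure. The first two steps are sound. The gap is in the exclusion step, which you state as: equal adjacent signatures ``would have been merged into a single run at a lower level \dots contradicting that $e$ is the \emph{lowest} covering node.'' This is not a contradiction. The run node $e'$ with $e' \rightarrow u[i]^{d}$ that merges the two equal signatures covers the split point but in general only a portion of $P$, so its existence is perfectly compatible with $e$ being the lowest node covering the whole occurrence of $P$. What the merging actually forces is $e = e'$: since $e'$ spans positions on both sides of the boundary between $e.{\rm left}$ and $e.{\rm right}$, it cannot be a proper descendant of $e$. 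At that point there is still no contradiction --- $e$ may legitimately be a run node --- and a further argument is needed: because $e \rightarrow u[i]^{d}$ has $e.{\rm left} = \val{u[i]}$ (a single copy), the left part $P_1 = \valp{u[1..i]}$ must be a suffix of $\val{u[i]}$, which forces $i = 1$; then $u[1]$ is a level-$0$ signature of the single character $P[1]$, so $P$ is contained in $\val{e} = P[1]^{d}$ and is unary, contradicting $|\pow{0}{P}| > 1$. Note that your main case never invokes the hypothesis $|\pow{0}{P}| > 1$, yet the claim is false without it (for $P = a^{k}$ one gets $u = \sig{a}^{k}$, so $\mathcal{P} = \emptyset$, while primary occurrences certainly exist); this is a sign that the step cannot be closed the way you describe.

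A smaller point: the parenthetical ``or blocked together (by $\mathit{Eblock}$)'' is misleading. Only equal adjacent signatures (merged by $\mathit{Epow}$) are excluded; two \emph{distinct} adjacent signatures grouped into one block by $\mathit{Eblock}$ at the next level still contribute their boundary to $\mathcal{P}$ and can carry primary occurrences (the block node itself can be the lowest covering node). So $\mathit{Eblock}$ grouping plays no role in discarding split points.
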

\begin{proof}
	If $|\pow{0}{P}| = 1$, then $P = a^{|P|}$ for some character $a \in \Sigma$.
	In this case, $P$ must be contained in a node labeled with a signature 
	$e \rightarrow \hat{e}^{d}$ such that $\hat{e} \rightarrow a$ and $d \geq |P|$.
	Hence, all primary occurrences of $P$ can be found by $\mathit{pOcc}_{\mathcal{G}}(P, 1)$.
	
	If $|\pow{0}{P}| > 1$, we consider the common sequence $u$ of $P$.
		Recall that substring $P$ occurring at $j$ in $\val{e}$ is represented by $u$ 
		for any $(e,j) \in \mathit{pOcc}(P)$ by Lemma~\ref{lem:common_sequence2}
	Hence at least $\mathit{pOcc}_{\mathcal{G}}(P) = \bigcup_{i \in \mathcal{P'}} \mathit{pOcc}_{\mathcal{G}}(P,i)$ holds, 
	where $\mathcal{P'} = \{|\valp{u[1]}|, \ldots, |\valp{u[..|u|-1]}|\}$.
	Moreover, we show that $\mathit{pOcc}_{\mathcal{G}}(P,i) = \emptyset$ for any $i \in \mathcal{P'}$ with $u[i] = u[i+1]$.
	Note that $u[i]$ and $u[i+1]$ are encoded into the same signature in the derivation tree of $e$, and
	that the parent of two nodes corresponding to $u[i]$ and $u[i+1]$ has a signature $e'$ in the form $e' \rightarrow u[i]^{d}$.
	Now assume for the sake of contradiction that $e = e'$.
	By the definition of the primary occurrences, $i = 1$ must hold, and hence, $\shrink{0}{P}[1] = u[1] \in \Sigma$.
	This means that $P = u[1]^{|P|}$, which contradicts $|\pow{0}{P}| > 1$.
	Therefore the statement holds.
	\qed
\end{proof}

Using Lemmas~\ref{lem:common_sequence2},~\ref{lem:ComputePatternRange} and~\ref{lem:pattern_occurrence_lemma1},
we get a static index for signature encodings:
\begin{lemma}\label{lem:static_index_lemma}
For a signature encoding $\mathcal{G}$ of size $w$ which generates a text $T$ of length $N$,
there exists a data structure of size $O(w)$ that computes, given a string $P$,
$\mathit{pOcc}_{\mathcal{G}}(P)$ in $O(|P| f_{\mathcal{A}} + \log w \log |P| \log^* M (\log N + \log |P| \log^* M) + q_{w} |\mathit{pOcc}_{\mathcal{S}}(P)|)$ time.
\end{lemma}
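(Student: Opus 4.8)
The plan is to assemble a Claude--Navarro-style grammar index directly on top of the signature encoding, and then use the common-sequence machinery to cut the number of range queries from $O(|P|)$ down to $|\encpow{\uniq{P}}|$. First I would fix the static data structures: the DAG of $\mathcal{G}$ (storing $|\val{\cdot}|$ and the parent lists for each signature), the two lexicographically sorted sets $\mathcal{X} = \{ e.{\rm left}^{R} \mid e \in \mathcal{V} \}$ and $\mathcal{Y} = \{ e.{\rm right} \mid e \in \mathcal{V} \}$, and a data structure for Problem~\ref{problem:2D} built on $\mathcal{R} = \{ (e.{\rm left}^{R}, e.{\rm right}) \mid e \in \mathcal{V} \}$. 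Since $|\mathcal{V}| = w$, each occupies $O(w)$ space, giving the claimed space bound. Exactly as in Lemma~\ref{lem:LinearSpaceIndexOfSLP} and its run-length extension, a point $(e.{\rm left}^{R}, e.{\rm right})$ lies in the rectangle $(x_1^{(P,j)}, x_2^{(P,j)}, y_1^{(P,j)}, y_2^{(P,j)})$ iff $e$ witnesses a primary occurrence of $P$ with $|P_1| = j$; for a run signature $e \to \hat{e}^{k}$ the prefix of $e.{\rm right} = \valp{\hat{e}}^{k-1}$ captures the in-run occurrences, so the same reporting query works without modification.

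Next, given $P$, I would parse it with the signature-encoding scheme to obtain its signature $\id{P}$ and the corresponding DAG node, temporarily creating the $O(|P|)$ signatures needed for $P$ and reusing existing ones via $\mathit{Assgn}^{-1}$; by the update machinery of Lemma~\ref{lem:theorem2} each such operation costs $O(f_{\mathcal{A}})$, so this step takes $O(|P| f_{\mathcal{A}})$ time and contributes the first term. With $\id{P}$ available I would invoke Lemma~\ref{lem:ComputeShortCommonSequence} with $e = \id{P}$, $j = 1$, $y = |P|$ to get $\encpow{\uniq{P}}$ in $O(\log|P|\log^* M)$ time. By Lemma~\ref{lem:pattern_occurrence_lemma1}, $\mathit{pOcc}_{\mathcal{G}}(P)$ is the union of $\mathit{pOcc}_{\mathcal{G}}(P,j)$ over $j$ ranging only over the run boundaries of $u = \uniq{P}$ (positions $i$ with $u[i] \neq u[i+1]$), or the single value $j=1$ when $P = a^{|P|}$. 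By Lemma~\ref{lem:common_sequence2} there are $|\encpow{u}| = O(\log|P|\log^* M)$ such boundaries, so only $O(\log|P|\log^* M)$ rectangles need be queried.

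Finally, for each of these $O(\log|P|\log^* M)$ values of $j$ I would compute the four rectangle bounds by Lemma~\ref{lem:ComputePatternRange} in $O(\log w(\log N + \log|P|\log^* M))$ time and issue one $\mathit{report}_{\mathcal{R}}$ query. Summed over all $j$, the rectangle computations give the middle term $O(\log w \log|P|\log^* M(\log N + \log|P|\log^* M))$, into which the $\hat{q}_{w} = O(\log w)$ per-query overhead is absorbed. Because a primary occurrence $(e,c)$ uniquely determines $|P_1| = |e.{\rm left}| - c + 1$ and hence its $j$, the sets $\mathit{pOcc}_{\mathcal{G}}(P,j)$ are pairwise disjoint, so the reporting cost aggregates to $O(q_{w}|\mathit{pOcc}_{\mathcal{G}}(P)|)$, the last term. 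After reporting I would delete the temporary signatures to restore $\mathcal{G}$, staying within $O(|P| f_{\mathcal{A}})$.

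Since the deep structural facts are already supplied as lemmas, the proof is largely an assembly, and the main thing to get right is the time bookkeeping: verifying that the candidate split positions number only $O(\log|P|\log^* M)$ so that Lemma~\ref{lem:ComputePatternRange} is invoked few enough times, and that the disjointness of the $\mathit{pOcc}_{\mathcal{G}}(P,j)$ makes the per-query reporting work telescope to $q_{w}|\mathit{pOcc}_{\mathcal{G}}(P)|$ rather than accruing a spurious $\log|P|\log^* M$ factor. The correctness of restricting attention to the run boundaries of $\uniq{P}$ is exactly Lemma~\ref{lem:pattern_occurrence_lemma1}, so no further combinatorial argument is needed beyond this accounting.
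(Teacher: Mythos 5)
Your proposal is correct and follows essentially the same route as the paper's proof: build the Claude--Navarro-style range-reporting index over $\mathcal{X}$, $\mathcal{Y}$, $\mathcal{R}$, compute $\id{P}$ and $\encpow{\uniq{P}}$ in $O(|P| f_{\mathcal{A}})$ time, restrict the split positions to the $O(\log|P|\log^* M)$ run boundaries via Lemma~\ref{lem:pattern_occurrence_lemma1}, and bound each rectangle computation by Lemma~\ref{lem:ComputePatternRange}. Your added remarks on the disjointness of the sets $\mathit{pOcc}_{\mathcal{G}}(P,j)$ and on discarding the temporary signatures are details the paper leaves implicit, but they do not change the argument.
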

\begin{proof}
	We focus on the case $|\pow{0}{P}| > 1$ as the other case is easier to be solved.
	We first compute the common sequence of $P$ in $O(|P|f_{\mathcal{A}})$ time.
	Taking $\mathcal{P}$ in Lemma~\ref{lem:pattern_occurrence_lemma1},
	we recall that $|\mathcal{P}| = O(\log |P| \log^* M)$ by Lemma~\ref{lem:common_sequence2}.
	Then, in light of Lemma~\ref{lem:pattern_occurrence_lemma1}, $\mathit{pOcc}_{\mathcal{G}}(P)$ can be obtained 
	by $|\mathcal{P}| = O(\log |P| \log^* M)$ range reporting queries.
	For each query, we spend $O(\log w (\log N + \log |P| \log^* M))$ time to compute $x_1^{(P,j)}, x_2^{(P,j)}, y_1^{(P,j)}$ and $y_2^{(P,j)}$ by Lemma~\ref{lem:ComputePatternRange}.
	Hence, the total time complexity is 
	\begin{eqnarray}
	O(|P| f_{\mathcal{A}} + \log |P| \log^* M ( \log w (\log N + \log |P| \log^* M) + \hat{q}_{w}) + q_{w} |\mathit{pOcc}_{\mathcal{S}}(P)|) \nonumber \\
	= O(|P| f_{\mathcal{A}} + \log w \log |P| \log^* M (\log N + \log |P| \log^* M) + q_{w} |\mathit{pOcc}_{\mathcal{S}}(P)|). 
	\nonumber
	\end{eqnarray}
	\qed
\end{proof}

In order to dynamize our index of Lemma~\ref{lem:static_index_lemma},
we consider a data structure for ``dynamic'' two-dimensional orthogonal range reporting
that can support the following update operations:
\begin{itemize}
	\item $\mathit{insert}_{\mathcal{R}}(p, x_{\mathit{pred}}, y_{\mathit{pred}})$:
	given a point $p = (x, y)$,
	$x_{\mathit{pred}} = \max \{ x' \in \mathcal{X} \mid x' \leq x \}$ and
	$y_{\mathit{pred}} = \max \{ y' \in \mathcal{Y} \mid y' \leq y \}$,
	insert $p$ to $\mathcal{R}$ and update $\mathcal{X}$ and $\mathcal{Y}$ accordingly.
	\item $\mathit{delete}_{\mathcal{R}}(p)$: given a point 
	$p = (x,y) \in \mathcal{R}$, delete $p$ from $\mathcal{R}$ and update $\mathcal{X}$ and $\mathcal{Y}$ accordingly.
\end{itemize}
We use the following data structure for the dynamic two-dimensional orthogonal range reporting. 
\begin{lemma}[\cite{DBLP:conf/soda/Blelloch08}]\label{lem:RangeQuery}
	There exists a data structure
	that supports $\mathit{report}_{\mathcal{R}}(x_1,x_2,y_1,y_2)$ 
	in $O(\log |\mathcal{R}| + \mathit{occ}(\log |\mathcal{R}| / \log \log |\mathcal{R}|))$ time, 
	and $\mathit{insert}_{\mathcal{R}}(p,i,j)$, 
	$\mathit{delete}_{\mathcal{R}}(p)$ in amortized $O(\log |\mathcal{R}|)$ time, 
	where $\mathit{occ}$ is the number of the elements to output. 
	This structure uses $O(|\mathcal{R}|)$ space.~\footnote{
		The original problem considers a real plane in the paper~\cite{DBLP:conf/soda/Blelloch08}, however, 
		his solution only need to compare any two elements in $\mathcal{R}$ in constant time. 
		Hence his solution can apply to our range reporting problem by maintains $\mathcal{X}$ and $\mathcal{Y}$ 
		using the data structure of order maintenance problem proposed 
		by Dietz and Sleator~\cite{DBLP:conf/stoc/DietzS87}, which enables us to 
		compare any two elements in a list $L$ and insert/delete an element to/from $L$ in constant time.
	}
\end{lemma}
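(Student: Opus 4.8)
The plan is to reduce the statement to the dynamic two-dimensional range reporting structure of Blelloch~\cite{DBLP:conf/soda/Blelloch08}, whose stated query and update bounds already coincide with those claimed here, and then to argue that his structure can be made to operate over our abstract ordered coordinate sets $\mathcal{X}$ and $\mathcal{Y}$ rather than over a numeric plane. First I would recall that Blelloch's data structure achieves $\mathit{report}_{\mathcal{R}}$ in $O(\log |\mathcal{R}| + \mathit{occ}(\log |\mathcal{R}| / \log \log |\mathcal{R}|))$ time and $\mathit{insert}_{\mathcal{R}}$, $\mathit{delete}_{\mathcal{R}}$ in amortized $O(\log |\mathcal{R}|)$ time within $O(|\mathcal{R}|)$ space, originally stated for points on the real plane. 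The crucial observation driving the reduction is that his algorithm never performs arithmetic on coordinates: it only ever decides the relative order of two coordinates, and every such comparison is between coordinates that are already present in $\mathcal{X}$ or $\mathcal{Y}$ (either as stored points, or as the query bounds $x_1, x_2, y_1, y_2$, which by the problem statement lie in $\mathcal{X}$ and $\mathcal{Y}$). Hence it suffices to support constant-time order comparison on the dynamically changing sets $\mathcal{X}$ and $\mathcal{Y}$.

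To provide this, I would maintain $\mathcal{X}$ and $\mathcal{Y}$ as two lists under the order-maintenance data structure of Dietz and Sleator~\cite{DBLP:conf/stoc/DietzS87}, which compares the positions of any two list elements in $O(1)$ time and inserts or deletes an element in $O(1)$ time. The interfaces of $\mathit{insert}_{\mathcal{R}}$ and $\mathit{delete}_{\mathcal{R}}$ are designed precisely for this purpose: the inputs $x_{\mathit{pred}}$ and $y_{\mathit{pred}}$ name the list positions immediately preceding the new coordinates, so each coordinate is spliced into its list in $O(1)$ time without any search, and deletion removes the coordinate in $O(1)$ time. With both comparisons and coordinate updates taking $O(1)$ time, every step of Blelloch's algorithm runs within his original bounds, and the two order-maintenance lists add only $O(|\mathcal{R}|)$ space; combining these yields the claimed complexities.

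The main obstacle I anticipate is verifying that Blelloch's structure genuinely depends only on order comparisons in the above sense---that no step secretly exploits the numeric values of coordinates (for instance via word-level packing, interpolation, or fractional cascading keyed on integers)---so that replacing real comparisons by order-maintenance queries is transparent. A secondary point to check is that the predecessor information handed to $\mathit{insert}_{\mathcal{R}}$ is exactly what the order-maintenance lists require to localize each splice, so that the $O(1)$ coordinate-update cost is preserved and no hidden amortized cost of re-searching for the insertion point leaks into the update bound.
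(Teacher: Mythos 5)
Your proposal matches the paper's justification (given in the footnote to the lemma): cite Blelloch's bounds for the real-plane version, observe that his structure only needs constant-time order comparisons between coordinates, and supply these via Dietz and Sleator's order-maintenance structure on $\mathcal{X}$ and $\mathcal{Y}$. Your additional remark about the predecessor arguments enabling $O(1)$ splicing is a correct elaboration of why the interface of $\mathit{insert}_{\mathcal{R}}$ is designed as it is, but the overall argument is the same as the paper's.
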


\begin{proof}[Proof of Theorem~\ref{theo:dynamic_index}]
	Our index consists of a dynamic signature encoding $\mathcal{G}$ and a dynamic range reporting data structure of Lemma~\ref{lem:RangeQuery}
	whose $\mathcal{R}$ is maintained as they are defined in the static version.
	We maintain $\mathcal{X}$ and $\mathcal{Y}$ in two ways;
	self-balancing binary search trees for binary search,
	and Dietz and Sleator's data structures for order maintenance.
	Then, primary occurrences of $P$ can be computed as described in Lemma~\ref{lem:static_index_lemma}.
	Adding the $O(\mathit{occ} \log N)$ term for computing all pattern occurrences from primary occurrences,
	we get the time complexity for pattern matching in the statement.

	Concerning the update of our index, 
	we described how to update $\mathcal{G}$ after $\mathit{INSERT}$, $\mathit{INSERT'}$ and $\mathit{DELETE}$ in Lemma~\ref{lem:theorem2}.
	What remains is to show how to update the dynamic range reporting data structure when a signature is added to or deleted from $\mathcal{V}$.
	When a signature $e$ is deleted from $\mathcal{V}$, 
	we first locate $e.{\rm left}^R$ on $\mathcal{X}$ and $e.{\rm right}$ on $\mathcal{Y}$,
	and then execute $\mathit{delete}_{\mathcal{R}}(e.{\rm left}^R, e.{\rm right})$.
	When a signature $e$ is added to $\mathcal{V}$, 
	we first locate
	$x_{\mathit{pred}} = \max \{ x' \in \mathcal{X} \mid x' \leq e.{\rm left}^R \}$ on $\mathcal{X}$ and
	$y_{\mathit{pred}} = \max \{ y' \in \mathcal{Y} \mid y' \leq e.{\rm right} \}$ on $\mathcal{Y}$,
	and then execute $\mathit{insert}_{\mathcal{R}}((e.{\rm left}^R, e.{\rm right}), x_{\mathit{pred}}, y_{\mathit{pred}})$.
	The locating can be done by binary search on $\mathcal{X}$ and $\mathcal{Y}$
	in $O(\log w \log N \log^* M)$ time as Lemma~\ref{lem:ComputePatternRange}.

	Since the number of signatures added to or removed from $\mathcal{G}$
	during a single update operation is upper bounded by Lemma~\ref{lem:up_sig_bound},
	we can get the desired time bounds of Theorem~\ref{theo:dynamic_index}.
	\qed
\end{proof}

\section{LZ77 factorization in compressed space}\label{sec:appendix_applications}
In this section, we show Theorem~\ref{theo:lzfac}. 
	Note that since each $f_i$ can be represented by the pair $(x_i,|f_i|)$, 
	we compute incrementally $(x_i,|f_i|)$ in our algorithm, 
	where $x_i$ is an occurrence position of $f_i$ in $f_1 \cdots f_{i-1}$.

For integers $j, k$ with $1 \leq j \leq j+k-1 \leq N$,
let $\mathit{Fst}(j,k)$ be the function which returns the minimum integer $i$ 
such that $i < j$ and $T[i..i+k-1] = T[j..j+k-1]$, if it exists. 
Our algorithm is based on the following fact:
\begin{fact}\label{fact:lz77}
Let $f_1, \ldots, f_z$ be the LZ77-factorization of a string $T$. 
Given $f_1, \ldots, f_{i-1}$, we can compute $f_{i}$ with $O(\log |f_i|)$ calls of $\mathit{Fst}(j,k)$
(by doubling the value of $k$, followed by a binary search), 
where $j = |f_1 \cdots f_{i-1}|+1$.
\end{fact}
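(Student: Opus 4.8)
The plan is to justify the claimed factor-computation procedure by unpacking the definition of the LZ77 factorization and showing that locating $f_i$ reduces to identifying the largest $k$ for which a previous occurrence of $T[j..j+k-1]$ exists, where $j = |f_1 \cdots f_{i-1}|+1$. First I would recall that, by definition, $f_i$ is either the single new character $T[j]$ (when $T[j]$ does not occur in $T[1..j-1]$) or else the longest prefix of $T[j..]$ that occurs at some position strictly before $j$ in $T$. The function $\mathit{Fst}(j,k)$ exactly tests the latter condition: it returns (the smallest) $i<j$ with $T[i..i+k-1]=T[j..j+k-1]$ if such a previous occurrence exists, and fails otherwise. Hence the length $|f_i|$ is precisely the largest $k \geq 1$ for which $\mathit{Fst}(j,k)$ succeeds (or $|f_i|=1$ as the degenerate fresh-character case, which is detected by $\mathit{Fst}(j,1)$ failing), and the associated occurrence position $x_i$ is the value returned by $\mathit{Fst}(j,|f_i|)$.

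Next I would argue that this largest $k$ can be found with $O(\log|f_i|)$ calls by a standard exponential-search-then-binary-search strategy. The key monotonicity observation is that if $T[j..j+k-1]$ has a previous occurrence, then so does every shorter prefix $T[j..j+k'-1]$ with $k' \leq k$; equivalently, $\mathit{Fst}(j,k)$ succeeds for all $k$ up to $|f_i|$ and fails for all larger $k$. This is immediate because a previous occurrence of a string witnesses previous occurrences of each of its prefixes. Given this monotone ``success/fail'' threshold, I would first probe $k = 1, 2, 4, 8, \ldots$, doubling until the first failure at some $k = 2^{t}$; this takes $O(t) = O(\log|f_i|)$ calls and brackets $|f_i|$ in the interval $[2^{t-1}, 2^{t})$. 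I would then binary-search within this interval of width $O(|f_i|)$ for the exact threshold $|f_i|$, costing a further $O(\log|f_i|)$ calls. Summing the two phases gives the claimed $O(\log|f_i|)$ total calls to $\mathit{Fst}$.

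There is no deep obstacle here; the fact is essentially definitional once the monotonicity of $\mathit{Fst}$ in $k$ is noted. The only point demanding mild care is the boundary handling: one must ensure the search respects the constraint $j+k-1 \leq N$ so that $T[j..j+k-1]$ is well defined (the doubling phase is simply capped at $N-j+1$), and one must correctly treat the fresh-character case $|f_i|=1$, where $\mathit{Fst}(j,1)$ itself fails and $f_i$ is recorded as the literal $T[j]$ rather than as a copy. With these conventions fixed, the procedure computes both $|f_i|$ and the occurrence position $x_i$ using $O(\log|f_i|)$ invocations of $\mathit{Fst}$, establishing the fact.
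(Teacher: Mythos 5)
Your proposal is correct and matches the paper's intent exactly: the paper states this Fact without a formal proof, offering only the parenthetical hint about doubling $k$ and then binary searching, and your argument is precisely the standard elaboration of that hint (monotonicity of the success of $\mathit{Fst}(j,k)$ in $k$, exponential search to bracket $|f_i|$, then binary search, with the boundary and fresh-character cases handled). Nothing further is needed.
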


We explain how to support queries $\mathit{Fst}(j,k)$ using the signature encoding.
We define $e.{\rm min} = \min \mathit{vOcc}(e,S) + |e.{\rm left}|$ for a signature $e \in \mathcal{V}$ with
$e \rightarrow e_{\ell}e_{r}$ or $e \rightarrow \hat{e}^{k}$.
We also define $\mathit{FstOcc}(P,i)$ for a string $P$ and an integer $i$ as follows:
\[
  \mathit{FstOcc}(P,i) = \min \{ e.{\rm min} \mid (e, i) \in \mathit{pOcc}_{\mathcal{G}}(P,i) \}
\]
Then $\mathit{Fst}(j,k)$ can be represented by $\mathit{FstOcc}(P,i)$ as follows:
\begin{eqnarray*}
\mathit{Fst}(j,k) &=& \min \{ \mathit{FstOcc}(T[j..j+k-1],i) - i \mid i \in \{ 1,\ldots, k-1 \} \\
&=& \min \{ \mathit{FstOcc}(T[j..j+k-1],i) - i \mid i \in \mathcal{P} \},
\end{eqnarray*}
where $\mathcal{P}$ is the set of integers in Lemma~\ref{lem:pattern_occurrence_lemma1} with $P = T[j..j+k-1]$.

Recall that in Section~\ref{sec:static_section} 
we considered the two-dimensional orthogonal range reporting problem to enumerate $\mathit{pOcc}_{\mathcal{G}}(P,i)$.
Note that $\mathit{FstOcc}(P,i)$ can be obtained by taking $(e, i) \in \mathit{pOcc}_{\mathcal{G}}(P,i)$ with $e.{\rm min}$ minimum.
In order to compute $\mathit{FstOcc}(P,i)$ efficiently instead of enumerating all elements in $\mathit{pOcc}_{\mathcal{G}}(P,i)$,
we give every point corresponding to $e$ the weight $e.{\rm min}$ and
use the next data structure to compute a point with the minimum weight in a given rectangle.
\begin{lemma}[\cite{DBLP:journals/comgeo/AgarwalAG0Y13}]\label{lem:minimum_weight_report}
Consider $n$ weighted points on a two-dimensional plane.
There exists a data structure which supports the query to return a point with the minimum weight in a given rectangle 
in $O(\log^2 n)$ time, occupies $O(n)$ space, and requires $O(n \log n)$ time to construct.
\end{lemma}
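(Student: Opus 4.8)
The plan is to build a linear-space structure that reduces a rectangle minimum query to a logarithmic number of one-dimensional range-minimum (RMQ) queries. First I would apply coordinate compression, replacing the two coordinates of each point by their ranks in $[1..n]$; this costs $O(n \log n)$ time by sorting and lets me treat both axes as permutations of $[1..n]$. After sorting the points by $x$-coordinate into a sequence $p_1, \dots, p_n$, the $x$-interval of a query rectangle becomes a \emph{contiguous} index interval $[\ell, r]$, so the only genuinely two-dimensional part left is the interaction between this index interval and the $y$-range, together with the orthogonal weight attribute.

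Next I would build a \emph{wavelet tree} over the sequence $Y = (\mathrm{rank}_y(p_1), \dots, \mathrm{rank}_y(p_n))$ of $y$-ranks. The wavelet tree occupies $n \log n$ bits, i.e. $O(n)$ machine words under the word-RAM model assumed in this paper (where a word has $\Omega(\log n)$ bits), and it is built in $O(n \log n)$ time together with the standard $\mathrm{rank}/\mathrm{select}$ directories on its bitvectors. At every node $v$ the elements routed to $v$ form a subsequence $S_v$, and I would augment $v$ with a \emph{succinct} (encoding) RMQ structure over the weights of the points in $S_v$, laid out in $S_v$'s local order. Such an RMQ answers in $O(1)$ time, returns the \emph{position} of the minimum without consulting the array, and uses $O(|S_v|)$ bits. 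Since the sizes $|S_v|$ sum to $n$ per level and there are $\log n$ levels, this augmentation adds $O(n \log n)$ bits $= O(n)$ words of space and $O(n \log n)$ construction time. I would also keep a single global array of the weights indexed by original point identity, again $O(n)$ words.

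To answer a query on $[x_1,x_2] \times [y_1,y_2]$ I would turn the $x$-range into the index interval $[\ell, r]$ and the $y$-range into a rank interval, then descend the wavelet tree in the usual way: at each visited node I map $[\ell, r]$ to the corresponding interval of $S_v$ with two $\mathrm{rank}$ operations, and I stop at the $O(\log n)$ canonical nodes whose value range lies entirely inside $[y_1, y_2]$. At each canonical node one $O(1)$-time RMQ over its weight encoding, restricted to the mapped interval, yields the local position of a candidate minimum; recovering the original point and looking up its weight costs $O(\log n)$ via $\mathrm{select}$ operations back up the tree. Comparing the $O(\log n)$ candidates then gives the global minimum in $O(\log^2 n)$ time, matching the claimed bound, while space stays $O(n)$ words and preprocessing stays $O(n \log n)$.

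The step I expect to be delicate is keeping the space linear while preserving correctness, precisely because the weight lives in a dimension orthogonal to both search keys: I cannot reuse the wavelet-tree bitvectors for it, so I must verify that the per-node RMQ encodings suffice for querying (they return only positions), that those positions can be translated back to the original points, and that the position-range mapping stays consistent as the value range is repeatedly split between children. Once the RMQ answers are interpreted as positions in $S_v$ and composed with the wavelet tree's $\mathrm{rank}$-based descent and $\mathrm{select}$-based lifting, correctness follows and the three stated bounds drop out of the accounting above.
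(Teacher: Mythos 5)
Your construction is correct, but note that the paper does not prove this lemma at all: it is imported verbatim as a black box from the cited reference (Agarwal et al.), so there is no internal proof to compare against. What you give is a self-contained derivation by the standard ``grid'' route: coordinate compression, a wavelet tree over the $y$-ranks of the points in $x$-order, and a succinct \emph{encoding} range-minimum structure over the weights at every wavelet-tree node. The accounting is the delicate part and you handle it correctly: the per-node RMQ encodings cost $O(|S_v|)$ bits rather than words, so the total stays at $O(n\log n)$ bits $=O(n)$ words; the encodings return only positions, so you keep a single global weight array and recover each candidate's identity by $O(\log n)$ select operations up the tree, which is exactly what makes the query $O(\log n)$ canonical nodes $\times$ $O(\log n)$ recovery $=O(\log^2 n)$ while keeping space linear. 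Construction is dominated by sorting, building the bitvector directories, and the linear-time RMQ encodings per level, i.e.\ $O(n\log n)$ overall. The only points worth making explicit are trivial: skip canonical nodes whose mapped $x$-interval is empty, and map the query rectangle's real coordinates to ranks by two binary searches, which is within budget. So your argument establishes all three stated bounds by an elementary and well-known alternative to the cited machinery; it is a legitimate substitute for the citation rather than a reconstruction of it.
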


Using Lemma~\ref{lem:minimum_weight_report}, we get the following lemma.
\begin{lemma}\label{lem:prevlemma}
Given a signature encoding $\mathcal{G}$ of size $w$ which generates $T$,
we can construct a data structure of $O(w)$ space in $O(w \log w \log N \log^* M)$ time
to support queries $\mathit{Fst}(j,k)$ in $O(\log w \log k \log^* M (\log N + \log k \log^* M))$ time.
\end{lemma}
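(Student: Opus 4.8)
The plan is to equip the two-dimensional range structure from Section~\ref{sec:static_section} with weights and replace the plain reporting structure by the minimum-weight structure of Lemma~\ref{lem:minimum_weight_report}. Concretely, for the plane defined by $\mathcal{X} = \{e.{\rm left}^R \mid e \in \mathcal{V}\}$, $\mathcal{Y} = \{e.{\rm right} \mid e \in \mathcal{V}\}$ and the point set $\mathcal{R} = \{(e.{\rm left}^R, e.{\rm right}) \mid e \in \mathcal{V}\}$, I would give the point of each signature $e$ the weight $e.{\rm min} = \min \vOcc(e,S) + |e.{\rm left}|$. The correctness of the reduction is already captured by the two displayed identities preceding the lemma: a primary occurrence $(e,i) \in \mathit{pOcc}_{\mathcal{G}}(P,i)$ places an occurrence of $P$ in $T$ at position $e.{\rm min} - i$ via the leftmost occurrence of $e$, so the minimum-weight point inside the rectangle $(x_1^{(P,i)},x_2^{(P,i)},y_1^{(P,i)},y_2^{(P,i)})$ yields exactly $\mathit{FstOcc}(P,i)$, and minimizing $\mathit{FstOcc}(P,i)-i$ over $i \in \mathcal{P}$ gives $\mathit{Fst}(j,k)$.

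For the construction, I would first compute $\min \vOcc(e,S)$ for every signature by a single top-down traversal of the DAG in topological order: the source $S$ is assigned position $1$, and each assignment $e \rightarrow e_\ell e_r$ (resp.\ $e \rightarrow \hat{e}^{k}$) propagates the current leftmost position of $e$ to its children with the appropriate offset, relaxing each child to the minimum. Since each edge is processed once this costs $O(w)$ time, after which every $e.{\rm min}$ follows in $O(1)$. Sorting the signatures into $\mathcal{X}$ and $\mathcal{Y}$ is done by comparison-based sorting in which each comparison of two left-/right-strings is one LCE query answered in $O(\log N \log^* M)$ time by Lemma~\ref{lem:sub_operation_lemma}; this term dominates the construction at $O(w \log w \log N \log^* M)$ time, while building the structure of Lemma~\ref{lem:minimum_weight_report} adds only $O(w \log w)$ time and $O(w)$ space.

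To answer a query $\mathit{Fst}(j,k)$ I would set $P = T[j..j+k-1]$ and first compute $\encpow{\uniq{P}}$, and hence the index set $\mathcal{P}$ of Lemma~\ref{lem:pattern_occurrence_lemma1}, by invoking Lemma~\ref{lem:ComputeShortCommonSequence} with $e = S$ in $O(\log N + \log k \log^* M)$ time; by Lemma~\ref{lem:common_sequence2} we have $|\mathcal{P}| = O(\log k \log^* M)$. For each such $i$ I would compute the query rectangle exactly as in Lemma~\ref{lem:ComputePatternRange} in $O(\log w (\log N + \log k \log^* M))$ time, then run one minimum-weight query in $O(\log^2 w)$ time to obtain $\mathit{FstOcc}(P,i)$. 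Since $w \leq N$ gives $\log^2 w \leq \log w \log N$, the per-$i$ cost is $O(\log w (\log N + \log k \log^* M))$, and multiplying by $|\mathcal{P}|$ yields the stated bound $O(\log w \log k \log^* M(\log N + \log k \log^* M))$. Finally I return $\min_{i \in \mathcal{P}}(\mathit{FstOcc}(P,i)-i)$, reporting failure when this value is not smaller than $j$.

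The main subtlety I anticipate is that $P = T[j..j+k-1]$ is only a substring of $T$ and need not be represented by a single signature $\id{P}$ in $\mathcal{G}$, whereas Lemma~\ref{lem:ComputePatternRange} is phrased in terms of $\id{P}$. The fix is to observe that the binary searches underlying that lemma only compare a prefix $P[i+1..]$ against a right-string and a reversed prefix $P[..i]^R$ against a left-string, and each comparison is an LCE query; specifying $P$ by the pair $(S,j)$ and invoking $\LCEQ$ (and its reversed variant) of Lemma~\ref{lem:sub_operation_lemma} on the substring $T[j..j+k-1]$ gives the same asymptotic cost without ever materializing $\id{P}$. I would also verify the boundary condition that the reported leftmost occurrence lies genuinely before position $j$, which is handled by the final comparison against $j$.
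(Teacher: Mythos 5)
Your proposal matches the paper's proof essentially step for step: the same weighted points $e.{\rm min}$ plugged into Lemma~\ref{lem:minimum_weight_report}, the same $O(w\log w\log N\log^* M)$ LCE-based sorting for construction, and the same query routine of computing $\mathcal{P}$ via Lemma~\ref{lem:ComputeShortCommonSequence} and issuing $O(\log k\log^* M)$ rectangle queries. The extra details you supply (the top-down DAG traversal for $\min\vOcc(e,S)$ and specifying $P$ by the pair $(S,j)$ rather than a materialized $\id{P}$) are correct elaborations of steps the paper leaves implicit.
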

\begin{proof}
For construction, we first compute $e.{\rm min}$ in $O(w)$ time using the DAG of $\mathcal{G}$.
Next, we prepare the plane defined by the two ordered sets $\mathcal{X}$ and $\mathcal{Y}$ in Section~\ref{sec:static_section}.
This can be done in $O(w \log w \log N \log^* M)$ time by sorting elements in $\mathcal{X}$ (and $\mathcal{Y}$)
by $\LCEQ$ algorithm (Lemma~\ref{lem:sub_operation_lemma}) and a standard comparison-based sorting.
Finally we build the data structure of Lemma~\ref{lem:minimum_weight_report} in $O(w \log w)$ time.

To support a query $\mathit{Fst}(j,k)$, we first compute $\encpow{\uniq{P}}$
with $P = T[j..j+k-1]$ in $O(\log N + \log k \log^* M)$ time by Lemma~\ref{lem:ComputeShortCommonSequence},
and then get $\mathcal{P}$ in Lemma~\ref{lem:pattern_occurrence_lemma1}.
Since $|\mathcal{P}| = O(\log k \log^* M)$ by Lemma~\ref{lem:common_sequence2},
$\mathit{Fst}(j,k) = \min \{ \mathit{FstOcc}(P,i) - i \mid i \in \mathcal{P} \}$
can be computed by answering $\mathit{FstOcc}$ $O(\log k \log^* M)$ times.
For each computation of $\mathit{FstOcc}(P,i)$,
we spend $O(\log w (\log N + \log k \log^* M))$ time to compute $x_1^{(P,j)}, x_2^{(P,j)}, y_1^{(P,j)}$ and $y_2^{(P,j)}$ by Lemma~\ref{lem:ComputePatternRange},
and $O(\log^2 w)$ time to compute a point with the minimum weight in the rectangle $(x_1^{(P,j)}, x_2^{(P,j)}, y_1^{(P,j)}, y_2^{(P,j)})$.
Hence it takes $O(\log k \log^* M (\log w (\log N + \log k \log^* M) + \log^2 w)) = O(\log w \log k \log^* M (\log N + \log k \log^* M))$ time in total.
\qed
\end{proof}

We are ready to prove Theorem~\ref{theo:lzfac} holds.
\begin{proof}[Proof of Theorem~\ref{theo:lzfac}]
We compute the $z$ factors of the LZ77-factorization of $T$ incrementally 
by using Fact~\ref{fact:lz77} and Lemma~\ref{lem:prevlemma} in $O(z \log w \log^3 N (\log^* M)^2)$ time.
Therefore the statement holds.
\qed
\end{proof}
We remark that we can similarly compute the Lempel-Ziv77 factorization \emph{with} self-reference of a text (defined below) in the same time and same working space. 

\begin{definition}[Lempel-Ziv77 factorization with self-reference~\cite{LZ77}]
The Lempel-Ziv77 (LZ77) factorization of a string $s$ with self-references is
a sequence $f_1, \ldots, f_k$ of non-empty substrings of $s$ 
such that $s = f_1 \cdots f_k$,
$f_1 = s[1]$, 
and for $1 < i \leq k$, if the character $s[|f_1..f_{i-1}|+1]$ does not occur in $s[|f_1..f_{i-1}|]$, then $f_i = s[|f_1..f_{i-1}|+1]$, otherwise $f_i$ is the longest 
prefix of $f_i \cdots f_k$ which occurs at some position $p$, where $1 \leq p \leq |f_1 \cdots f_{i-1}|$.
\end{definition}

\vspace*{1pc}

\noindent \textbf{Acknowledgments.} We would like to thank Pawe{\l} Gawrychowski for drawing our attention to
the work by Alstrup et al.~\cite{LongAlstrup,DBLP:conf/soda/AlstrupBR00}
and for fruitful discussions.

\bibliographystyle{siam}
\bibliography{ref}

\clearpage
\appendix
\section{Appendix: Supplementary Examples and Figures}\label{sec:Example_Section}
\begin{example}[$\encblockd{p}$ and $\encpow{s}$] \label{ex:Encblock}
Let $\log^* W = 2$, and then $\deltaLR{L} = 8, \deltaLR{R} = 4$.\\
If $p = 1,2,3,2,5,7,6,4,3,4,3,4,1,2,3,4,5$ and $d = 1,0,0,1,0,1,0,0,1,0,0,0,1,0,1,0,0$,
then $\encblockd{p} = (1,2,3),(2,5),(7,6,4),(3,4,3,4),(1,2),(3,4,5)$, $|\encblockd{p}| = 6$ and $\encblockd{p}[2] = (2, 5)$. 
For string $s = aaaabbbbbabbaa$,
$\encpow{s} = a^4b^5a^1b^2a^2$ and
$|\encpow{s}| = 5$ and $\encpow{s}[2] = b^5$.
\end{example}

\begin{example}[SLP]\label{ex:SLP}
	Let $\mathcal{S} = (\Sigma, \mathcal{V}, \mathcal{D}, S)$ be the SLP
	s.t.
	$\Sigma = \{A, B, C \}$, $\mathcal{V} = \{ X_1, \cdots , X_{11} \}$, 
	$\mathcal{D} = \{ 
	X_{1} \rightarrow A, X_{2} \rightarrow B, X_{3} \rightarrow C, 
	X_4 \rightarrow X_{3}X_{1}, X_5 \rightarrow X_{4}X_{2}, 
	X_6 \rightarrow X_{5}X_{5}, X_7 \rightarrow X_{2}X_{3}, 
	X_8 \rightarrow X_{1}X_{2}, X_9 \rightarrow X_{7}X_{8}, 
	X_{10} \rightarrow X_{6}X_{9}, X_{11} \rightarrow X_{10}X_{6}
	\}$, $S = X_{11}$, 
	the derivation tree of $S$ represents $CABCABBCABCABCAB$.
\end{example}
\begin{example}[RLSLP]\label{ex:tree}
Let $\mathcal{G} = (\Sigma, \mathcal{V}, \mathcal{D}, S)$ be an RLSLP, 
where $\Sigma = \{A, B\}$, $\mathcal{V} = \{1, \ldots , 24 \}$, $\mathcal{D} = \{ 
1 \rightarrow A, 2 \rightarrow B, 
3 \rightarrow 1^1, 4 \rightarrow 2^1, 5 \rightarrow 2^2, 6 \rightarrow 1^2,
7 \rightarrow (3,4), 8 \rightarrow (3,5), 9 \rightarrow (8,3), 10 \rightarrow (4,3), 
11 \rightarrow (10,4), 12 \rightarrow (11,6), 
13 \rightarrow 7^3, 14 \rightarrow 9^1, 15 \rightarrow 10^1, 16 \rightarrow 12^1, 17 \rightarrow 10^3, 
18 \rightarrow (13,14), 19 \rightarrow (18,15), 20 \rightarrow (16,17), 
21 \rightarrow 19^1, 22 \rightarrow 20^1, 
23 \rightarrow (21,22), 
24 \rightarrow 23^1
\}$, and $S = 24$.
The derivation tree of the start symbol $S$ represents a single string 
$T = ABABABABBABABABAABABABA$. 
\end{example}

\begin{example}[Signature encoding]\label{ex:signature_dictionary}
Let $\mathcal{G} = (\Sigma, \mathcal{V}, \mathcal{D}, S)$ be an RLSLP of Example~\ref{ex:tree}. 
Assuming $\encblock{\pow{0}{T}} = (3,4)^3,(3,5,3),(4,3),(4,3,4,6), (4,3)^3$,  
$\encblock{\pow{2}{T}} = (13,14,15), (16,17)$ and 
$\encblock{\pow{3}{T}} = (21,22)$ hold, 
$\mathcal{G}$ is the signature encoding of $T$ and $\id{T} =  24$. 
Here, $\sig{(13,14)} = 18$, $\sig{(4,3,4,6)} = 12$, $\sig{(4,5)} = \rm{undefined}$.
See Fig.~\ref{fig:SignatureTree} for an illustration of the derivation tree of $\mathcal{G}$ and the corresponding DAG.
\end{example}

\begin{example}[Primary occurrences]\label{ex:primary_occurrence}
	Let $\mathcal{S}$ be the SLP of Example~\ref{ex:SLP}. 
	Given a pattern $P = BCAB$, 
	then $P$ occurs at $3$, $7$, $10$ and $13$ in the string $T$ represented by SLP $\mathcal{S}$. 
	Hence $\mathit{Occ}(P,T) = \{3,7,10,13\}$. 
	On the other hand, $P$ occurs at $3$ in $\val{X_6}$ and $P$ is divided by $X_{5}$ and $X_{5}$, where $X_{6} \rightarrow X_{5}X_{5}$.
	Similarly, divided $P$ occurs at $1$ in $\val{X_9}$, at $10$ in $\val{X_{11}}$. 
	Hence $\mathit{pOcc}_{\mathcal{S}}(P) = \{ (X_{6},3), (X_{11}, 10), (X_{9},1)\}$. 
	Specifically
	$\mathit{pOcc}_{\mathcal{S}}(P,1) = \{ (X_{6},3), (X_{11}, 10 )\}$, 
	$\mathit{pOcc}_{\mathcal{S}}(P,2) = \{ (X_{9},1) \}$ and 
	$\mathit{pOcc}_{\mathcal{S}}(P,3) = \phi$.  
	Hence we can also compute $\mathit{Occ}(P,T) = \{3,7,10,13\}$ by 
	$\mathit{vOcc}(X_{6},S) = \{1,11\}$, 
	$\mathit{vOcc}(X_{9},S) = \{7\}$, and 
	$\mathit{vOcc}(X_{11},S) = \{1\}$. 
	See also Fig.~\ref{fig:grid}. 
\end{example}

\begin{figure}[ht]
\begin{center}
  \includegraphics[scale=0.7]{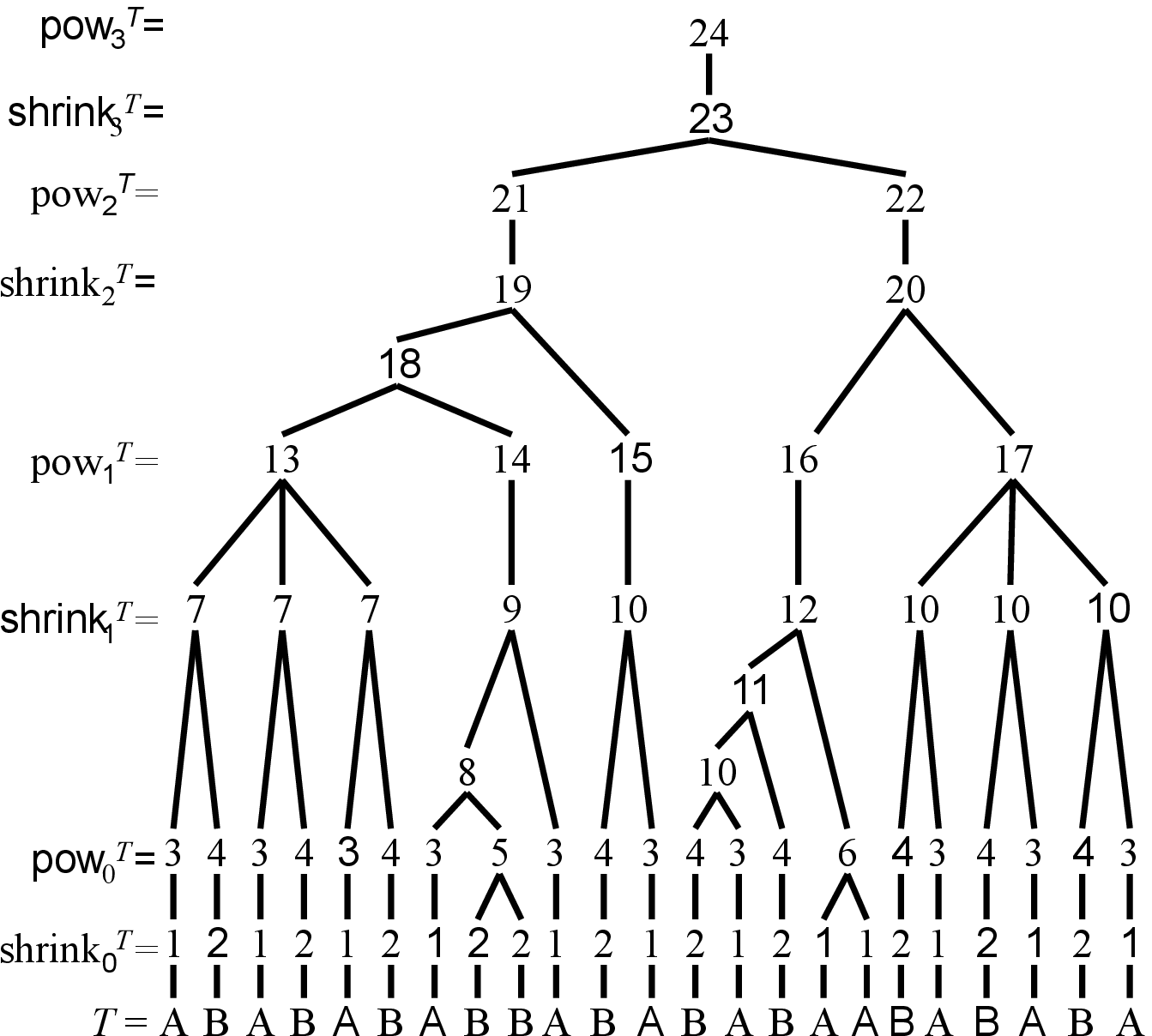}  
  \includegraphics[scale=0.8]{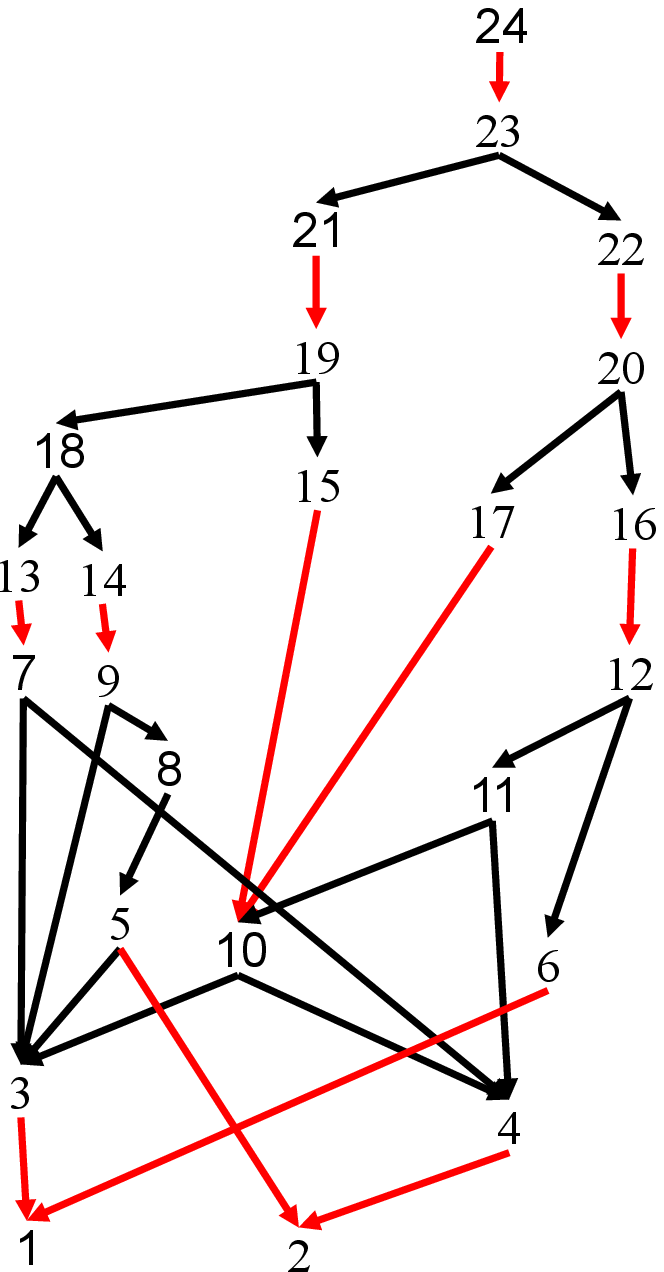}
  \caption{
  The derivation tree of $S$ (left) and the DAG for $\mathcal{G}$ (right) of Example~\ref{ex:tree}. 
  In the DAG, the black and red arrows represent $e \rightarrow e_{\ell}e_r$ and
  $e \rightarrow \hat{e}^{k}$ respectively.   
  In Example~\ref{ex:signature_dictionary}, $T$ is encoded by signature encoding.
  }
  \label{fig:SignatureTree}
\end{center}
\end{figure}

\begin{figure}[h]
\begin{center}
  \includegraphics[scale=0.5]{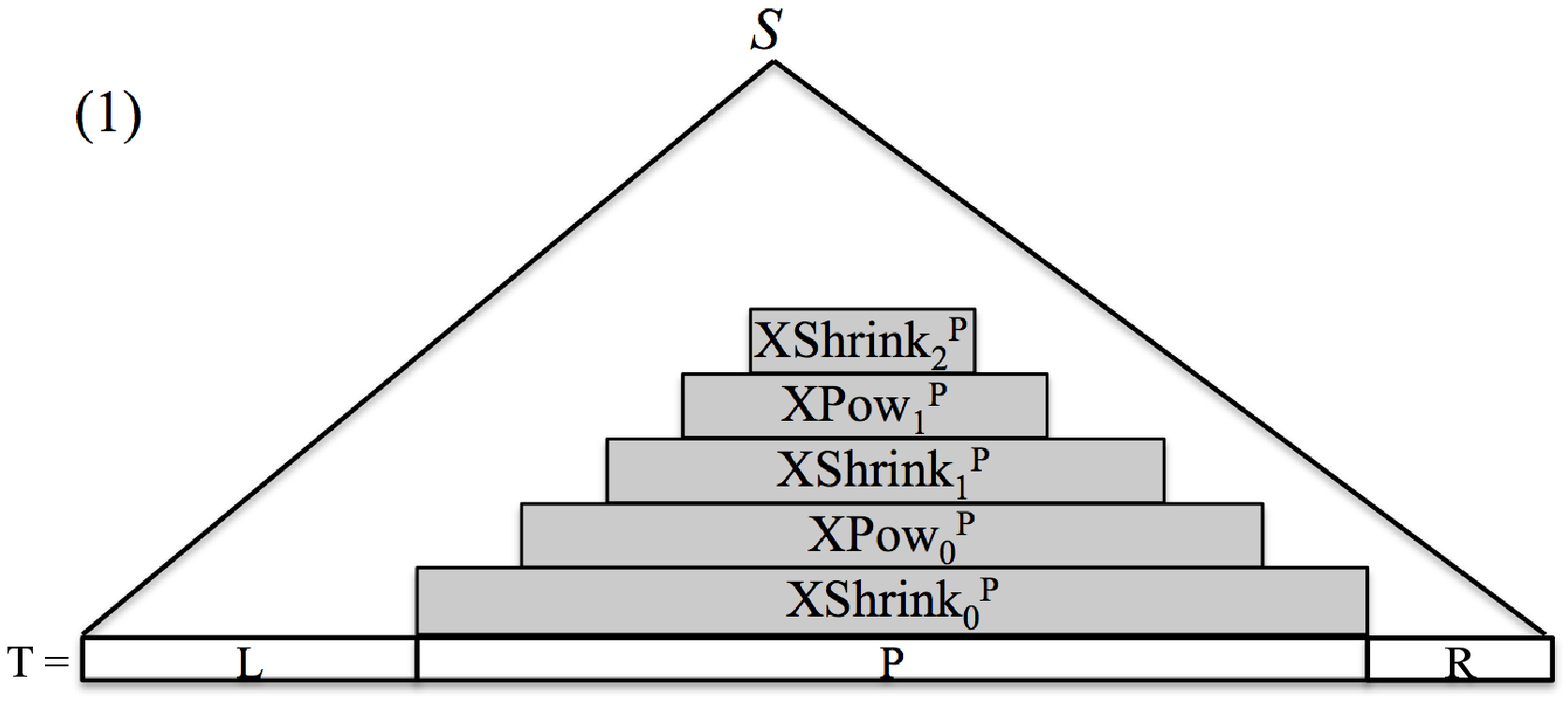}
  \includegraphics[scale=0.6]{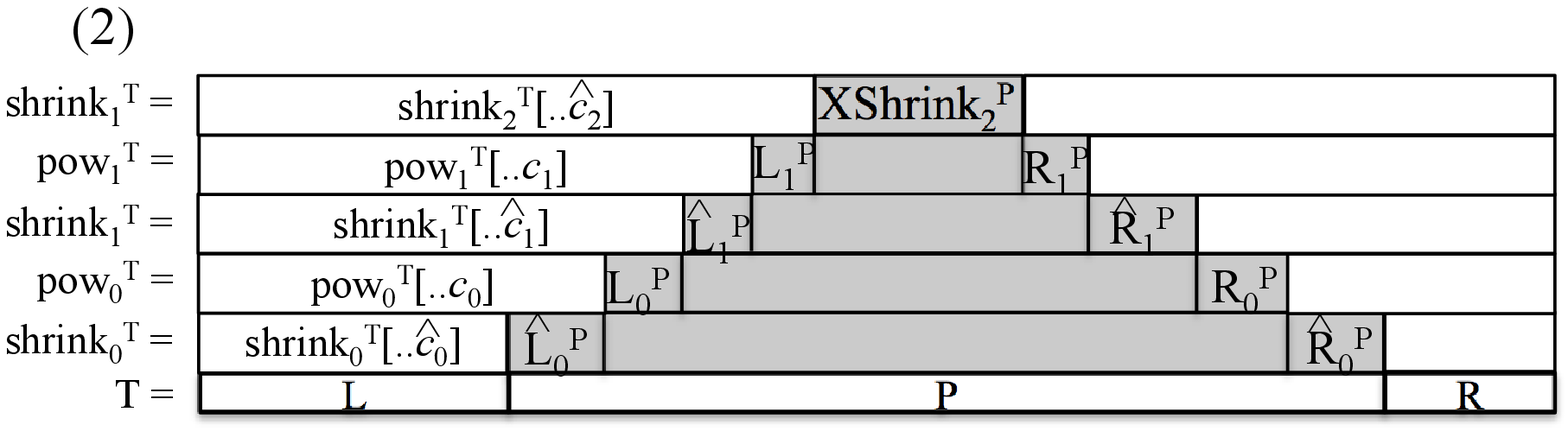}
  \includegraphics[scale=0.5]{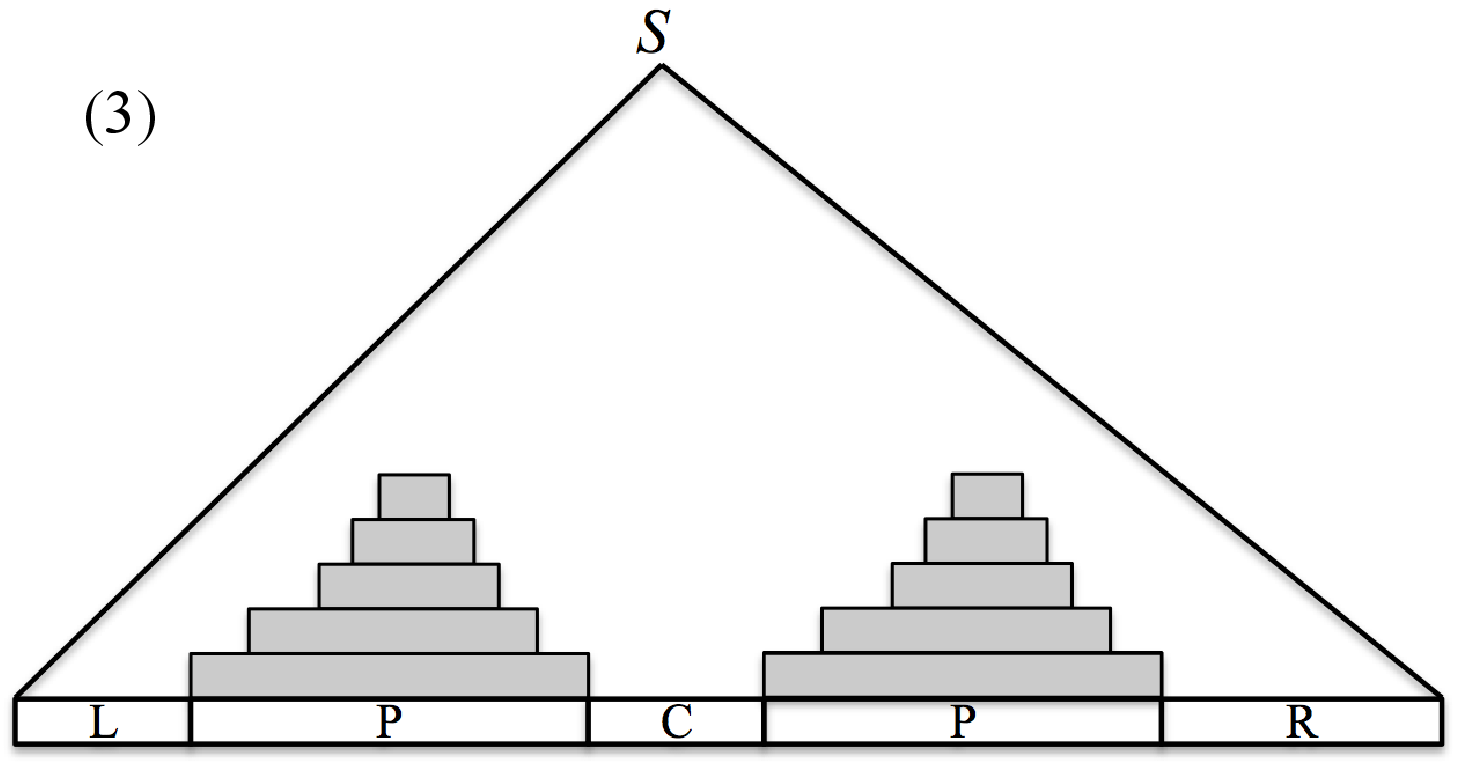}
  \caption{
  Abstract images of consistent signatures of substring $P$ of text $T$,
  on the derivation trees of the signature encoding of $T$.
  Gray rectangles in Figures (1)-(3) represent common signatures for occurrences of $P$. 
  (1) Each $\mathit{XShrink}_{t}^{P}$ and $\mathit{XPow}_{t}^{P}$ occur on substring $P$ 
  in $\mathit{shrink}_{t}^{T}$ and $\mathit{Pow}_{t}^{T}$, respectively, where $T = LPR$.
  (2) The substring $P$ can be represented by 
  $\hat{L}_{0}^{P}L_{0}^{P}\hat{L}_{1}^{P}L_{1}^{P}\mathit{XShrink}_{2}^{P}R_{1}^{P}\hat{R}_{1}^{P}R_{0}^{P}\hat{R}_{0}^{P}$. 
  (3) There exist common signatures on every substring $P$ in the derivation tree.
  } 
  \label{fig:CommonSequence}
\end{center}
\end{figure}

\begin{example}[SLP]\label{ex:SLPGrid}
	Let $\mathcal{S}$ be the SLP of Example~\ref{ex:SLP}. 
	Then, 
	\begin{eqnarray*}
		\mathcal{X} &=& \{ x_1, x_4, x_2, x_8, x_5, x_9, x_6, x_{10}, x_{11}, x_3, x_7 \}, \\
		\mathcal{Y} &=& \{ y_1, y_8, y_2, y_7, y_9, y_3, y_4, y_{5}, y_{6}, y_{10}, y_{11} \}, \\
	\end{eqnarray*}
	$x_i = \val{X_i}^{R}$, $y_i = \val{X_i}$ for any $X_i \in \mathcal{V}$.
	See also Fig.~\ref{fig:grid}. 
\end{example}
\begin{figure}[ht]
	\begin{center}
		\includegraphics[scale=0.7]{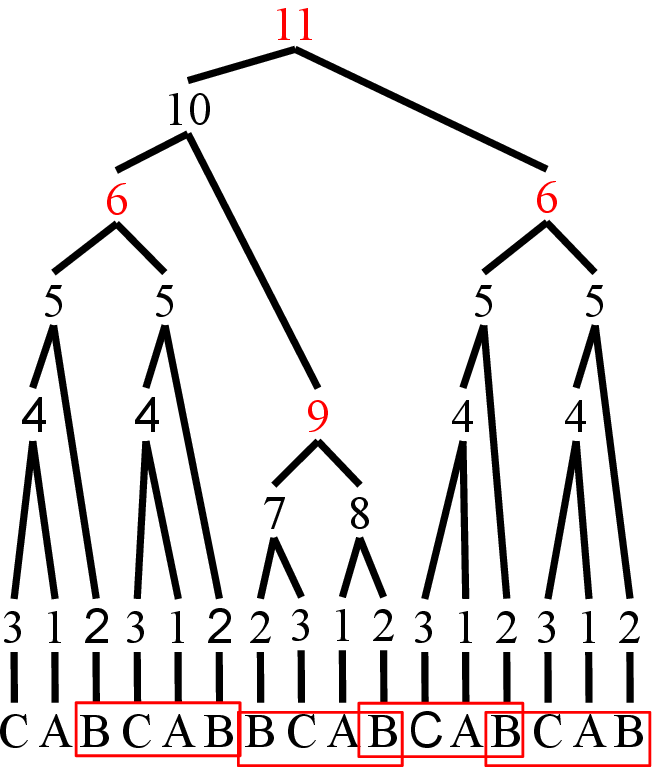}
		\includegraphics[scale=0.48]{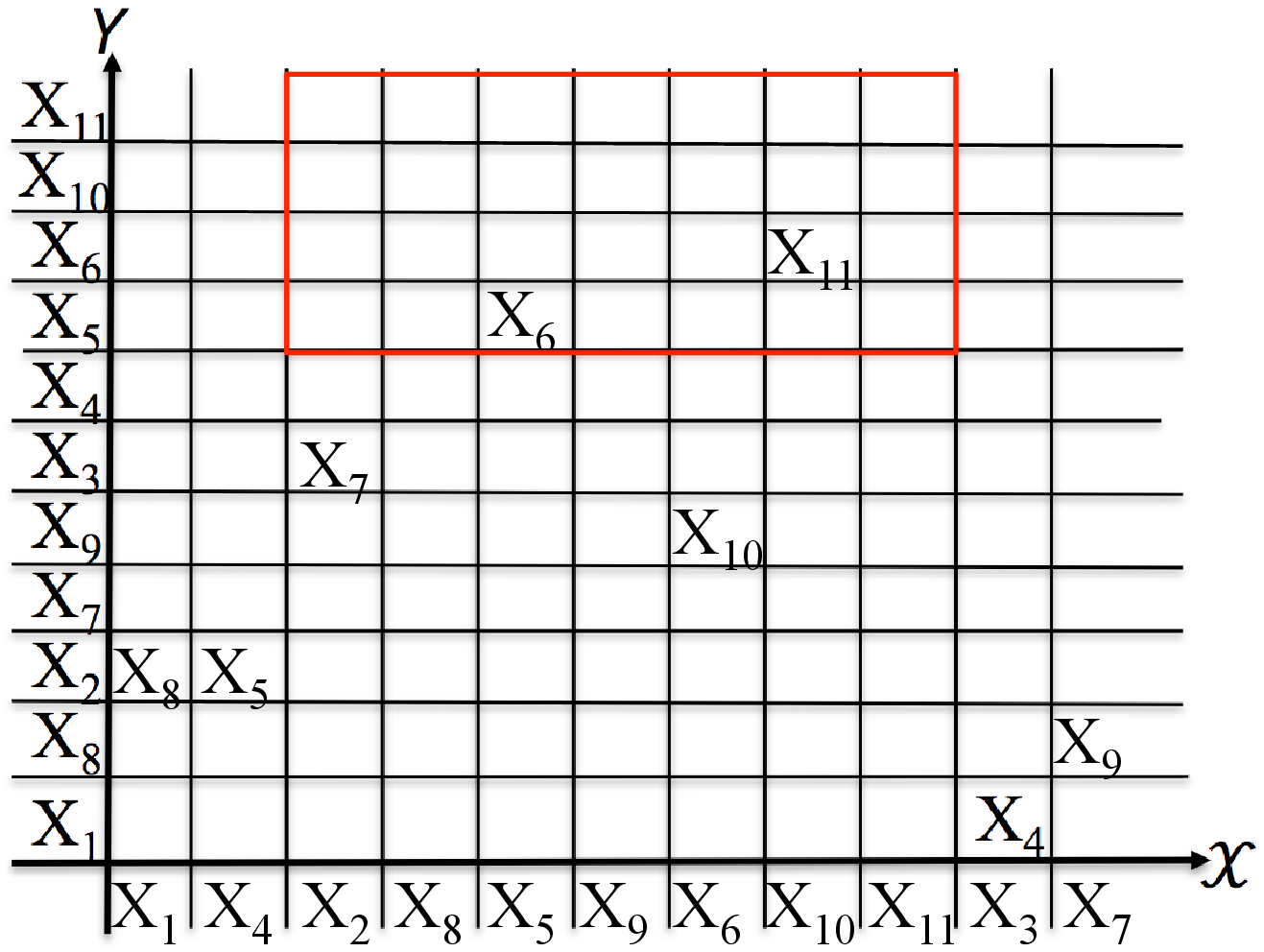}
		\caption{
			The left figure is the derivation tree of SLP $\mathcal{S}$ of Example~\ref{ex:SLP}, which derives the string $T$. 
			The red rectangles on $T$ represent all occurrences of $P = BCAB$ in $T$. 
			The right grid represents the relation between $\mathcal{X}$, $\mathcal{Y}$ and $\mathcal{R}$ of Example~\ref{ex:SLPGrid}. 
			The red rectangle on the grid is a query rectangle $(x^{(P,1)}_1,x^{(P,1)}_2,y^{(P,1)}_1,y^{(P,1)}_2)$, where 
			$x^{(P,1)}_1 = x_2$, $x^{(P,1)}_2 = x_{11}$, $y^{(P,1)}_1 = y_5$ and $y^{(P,1)}_2 = y_{11}$. 
			Therefore, $\mathit{report}_{\mathcal{R}}(x^{(P,1)}_1,x^{(P,1)}_2,y^{(P,1)}_1,y^{(P,1)}_2) = \{ X_{6}, X_{11} \}$.
		}
		\label{fig:grid}
	\end{center}
\end{figure}

\end{document}